\newcommand{\Real}{{\mathbb R}}
\DeclareMathOperator{\bfx}{{\bf x}}
\newtheorem{definition}{Definition}
\newtheorem{thm}{Theorem}
\newtheorem{prop}{Proposition}
\newtheorem{lem}[thm]{Lemma}
\theoremstyle{definition}
\theoremstyle{remark}
\def\shownotes{1}  
\newcommand{\authnote}[2]{{$\ll$\textsf{\footnotesize #1 notes: #2}$\gg$}}
\newcommand{\authnote}[2]{}
\newenvironment{blockquote}{%
  \par%
  \medskip
  \leftskip=2em\rightskip=2em%
  \noindent\ignorespaces}{%
  \par\medskip}
\renewcommand{\epsilon}{\varepsilon}
\title{Towards Understanding Limitations of Pixel Discretization
  Against Adversarial Attacks}
\author{
	Jiefeng Chen\,$^{1}$\hspace{6mm}
	Xi Wu\,$^{2}$\hspace{6mm}
	Vaibhav Rastogi\,$^{1}$\hspace{6mm}
	Yingyu Liang\,$^{1}$ \hspace{6mm}
	Somesh Jha\,$^{1}$ \\
	\vspace{1mm}
	$^{1}$\,University of Wisconsin-Madison \hspace{5mm}
	$^{2}$\,Google \\
}
\begin{document}

\maketitle

\begin{abstract}
  Wide adoption of artificial neural networks in various domains has led to an
  increasing interest in defending adversarial attacks against them. Preprocessing
  defense methods such as pixel discretization are particularly attractive in
  practice due to their simplicity, low computational overhead, and
  applicability to various systems. It is observed that such methods work well
  on simple datasets like MNIST, but break on more complicated ones like
  ImageNet under recently proposed strong white-box attacks. To understand the
  conditions for success and potentials for improvement, we study the pixel
  discretization defense method, including more sophisticated variants that take
  into account the properties of the dataset being
  discretized. Our results again show poor resistance against the strong
  attacks. We analyze our results in a theoretical framework and offer strong
  evidence that pixel discretization is unlikely to work on all but the simplest of the
  datasets. Furthermore, our arguments present insights why some other
  preprocessing defenses may be insecure.
	
	%
\end{abstract}


\section{Introduction}
\label{sec:intro}
Deep learning models have been shown to be vulnerable to adversarial perturbations,
which are slight perturbations of natural input data but are classified
differently than the natural input~\cite{szegedy2013intriguing}.
Adversarial robustness, or reducing this vulnerability to adversarial perturbations,
of deep neural networks (DNNs) has received significant attention
in recent years (\cite{dziugaite2016study, guo2017countering, meng2017magnet,
  xu2017feature, buckman2018thermometer,xie2017mitigating,song2017pixeldefend,
  samangouei2018defense}), due to the interest of deploying deep learning systems in various security-sensitive or security-critical applications, such as self-driving cars and identification systems. Among various methods, 
  defense methods that preprocess the input and then pass them to the existing learning systems are particularly attractive in
  practice due to their simplicity, low computational overhead, and
  direct applicability to various neural network systems without interfering with the training.
Various such preprocessing methods have been proposed and achieve good results against some old attacks. Unfortunately, as observed by Athalye et al.~\cite{athalye2018obfuscated},
when applied to typical datasets, these methods fall in front of the stronger Backward Pass Differentiable Approximation attack (BPDA). It is now a general perception that preprocessing methods cannot work against such strong attacks. However, the question about why they fail remain largely open. 

This paper studies
a prototypical preprocessing defense method called \emph{pixel discretization}, taking a first step towards understanding the effectiveness and limitations of such methods.
Pixel discretization works by first constructing a codebook of codewords (also called codes) for pixel values, then discretizing an input vector using these codewords, and
finally feeding the discretized input to a pre-trained base model,
with the hope that adversarial noise can be removed by the discretization.
For example, the color-depth reduction by Xu et al.~\cite{xu2017feature} and the subsequent thermometer encoding technique proposed by \cite{buckman2018thermometer} essentially ``round'' pixels to nearby codewords during preprocessing.
They achieve good results on simple datasets such as MNIST~\cite{mnist} against strong attacks such as Carlini-Wagner~\cite{carlini2017towards},
but later are shown to achieve poor performance against BPDA when applied to datasets of moderate or higher complexity (such as CIFAR-10~\cite{krizhevsky2009learning}).
However, we observe that both of these pixel discretization proposals employ very
simple codes which are \emph{completely independent of the data distribution}.
This thus leads to the following natural and intriguing question:
\vskip 2pt
\begin{blockquote}
  \emph{Are there more sophisticated {\em pixel discretization} techniques,
    especially those that are data-specific, which would be significantly better against
    white-box attacks?}
\end{blockquote}
\vskip 2pt
This paper studies the effectiveness of pixel discretization techniques, including data-specific ones, through extensive theoretical and empirical analyses. Our main results suggest a strong negative answer for the above question: if the underlying base model has poor adversarial robustness, pixel discretization is unlikely to improve robustness except for very simple datasets. On one hand, we obtain strong performance on simple datasets like MNIST, identify conditions under which the method can provably work, and propose a simple certificate that can be computed to lower bound the performance against any attacks within a budget.
On the other hand, poor performance is observed on more complicated datasets like CIFAR-10.
We then provide detailed analysis on why the method fails on such data.

A primary tenet of our study is to view discretization
by codewords as constructing a clustering of pixels, where pixels in the same cluster
gets assigned to the same codeword. Our key observation is a stringent tradeoff between
accuracy and robustness: To allow only a marginal reduction in accuracy,
we would like to keep our clusters of pixels small so that only
close enough pixels map to cluster centers. However, on complex data this requires many small,
closely-packed clusters whose boundary pixels can be easily perturbed to pixels
in other clusters, implying that robustness will not be high enough if the underlying
base model has poor adversarial robustness.

As a result, our theoretical and empirical analyses hinge on the separability of pixels:
If the pixels form well-separated clusters, we can easily map them to cluster
centroids and achieve a discretization where pixel perturbation does not change
the clusters they belong to, implying adversarial robustness. Based on this
understanding, we provide a framework for formally certifying adversarial
robustness. While our theory is able to provide certificates of robustness for
simple datasets such as MNIST, for more complex datasets like CIFAR-10, we show
by various means that pixel discretization is an inadequate defense against
adversarial examples. More specifically, complex datasets do not form well-separated clusters, so pixel discretization is unlikely to
improve adversarial robustness.

Many of our arguments are general and may be used to study other preprocessing
techniques as well. For instance, if a defense is not performed at the pixel level
but at the level of patches of pixels, our arguments will extend to
that defense as well.

In summary, this paper makes the following contributions:
\begin{itemize}
  \item We develop more sophisticated pixel discretization techniques that take
    into account the structure of the data points for discretization. Our experiments show that even these sophisticated defenses do not
    perform well against BPDA attacks.
  \item We develop a theoretical framework for pixel discretization defenses and
    present conditions under which they are robust.
  \item We confirm the general expectation that pixel discretization works well
    on simple datasets like MNIST. In fact, we are able to formally certify that
    pixel discretization on such datasets is robust against any adversarial
    attack. This also shows that gradient masking preprocessing defenses are
    possible for some simple distributions.
  \item We further analyze complex datasets like CIFAR-10 to argue that it is
    unlikely to achieve robustness with pixel discretization on these datasets.
\end{itemize}


The rest of this paper is organized as follows. Section~\ref{sec:background}
presents the necessary background. We then present our experiments with
color-depth reduction defense in Section~\ref{sec:colordepth}. We then explore
data-specific pixel discretization defenses in Section~\ref{sec:dataspecific}.
In Section~\ref{sec:analysis}, we present our theoretical framework for
analyzing the results as well as present further empirical results on MNIST and CIFAR-10
supporting the analysis. We present a discussion with insights on preprocessing defenses in
general in Section~\ref{sec:comments}. Finally, we present related work in
Section~\ref{sec:related} and conclude in Section~\ref{sec:con}.

The experiment code is available at \url{https://github.com/jfc43/pixel-discretization}.

\section{Preliminaries}
\label{sec:background}

This section presents the relevant background for this work, important
definitions, and the datasets used in our evaluations.

\subsection{Background}
Deep neural network (DNN) models are often vulnerable to adversarial
inputs, which are slight perturbations of ``natural'' test inputs but
lead the model to produce adversary-selected outputs different from the
outputs to the natural inputs.  The susceptibility of neural networks
to adversarial perturbations was first discovered
by~\cite{biggio:2013,szegedy2013intriguing}, and a large body of
research has been devoted to it since then. A long line of recent
research is dedicated to attacks with and defenses against adversarial
perturbations.  Several other types of attacks have been explored in
the context of machine learning. These include {\it training-time}
attacks where an adversary wishes to poison a data set so that a
``bad'' hypothesis is learned by an ML-algorithm; {\it model
  extraction} attacks where an attacker learns a model by making
queries to an available model; and a {\it model inversion} attack
where the attacker recovers the inputs used to train a model.  The
reader is referred to Section~\ref{sec:related} for a very brief
survey of these research works.

In this paper, we focus on {\it test-time} attacks on supervised image
classification problems. The input image $\bfx \in \Real^{H \times W \times C}$(the image has $H$ rows, $W$ columns and $C$ channels). 
$\bfx[i,j,l]$ represents the value of $i_{th}$ row, $j_{th}$ column and $l_{th}$ channel of the image. 
$\bfx[i,j]$ represents the pixel at  $i_{th}$ row and $j_{th}$ column. 
We have a DNN model $F_\theta$ (where
$\theta$ are model parameters, and we simply write $F$ if $\theta$ is
clear from the context) that maps $\Real^{H \times W \times C}$ 
to a set of class labels.  In our context we assume
that the classifier $F$ has been trained without any interference from
the attacker (i.e. no training time attacks).  Given an attack
distance metric $\|\cdot\|$, the goal of an attacker is to craft a
perturbation $\delta$ so that $F(\bfx+\delta) \neq F(\bfx)$, where
$\bfx \sim \mathcal{D}$ and $\|\delta\| \le \epsilon$.  There are
several algorithms for crafting the perturbation $\delta$, but in this
paper we use the recent method by ~\cite{athalye2018obfuscated}.

There have been several directions for defenses to test-time attacks,
but we mainly focus on preprocessing defenses. In such defenses, one
designs a preprocessing function $T: \Real^{H \times W \times C} 
\mapsto \Real^{H \times W \times C}$, and with
a base model $F$, the end-to-end predictions are produced as
$F(T(\cdot))$.  In this context there are three types of attacks: (1)
\emph{Black Box Attack} where the attacker can only get zero-order
information of $F$ (i.e. the outputs of $F$ correspond to the given
inputs), (2) \emph{Gray Box Attack} where the attacker knows $F$, but
is unaware of $T$, and (3) \emph{White Box Attack} where the attacker
knows both $F$ and $T$.  This paper considers white-box attacks since
this is the strongest attack model.

Given a preprocessing technique $T$, the goal of an adversary is thus
to find another image $\mathbf{z}$ such that $\| \mathbf{z} - \bfx \|
\le \epsilon$ and $F(T(\mathbf{z})) \not= F(T(\bfx))$.

\subsection{Backward Pass Differentiable Approximation (BPDA) Attack}
Introduced recently by Athalye et al.~\cite{athalye2018obfuscated},
the {\it backward pass differentiable approximation (BPDA)} attack is
a white-box attack that assumes the knowledge of both the classifier
$F$ and the preprocessor $T$. The BPDA-attack, given an input and a
target label, computes adversarial perturbations by gradient descent
over $F(T(\cdot))$. The main problem that BPDA solves is that many
preprocessing techniques ``obfuscate gradients'' by using a $T$ that
is not differentiable or is random. BPDA addresses this issue by
using $F(T(\cdot))$ for the forward pass but switching to a
``differentiable approximation'' of $T$ when computing gradients for
the backward pass. 

\subsection{Adversarial Training}
We briefly describe the technique of adversarial training. For
detailed account the reader is referred
to~\cite{madry2017towards}. Roughly speaking adversarial training
works as follows: before processing data $\bfx$ (e.g. using it in an
iteration of stochastic-gradient descent (SGD)) we apply an algorithm
for crafting adversarial examples, such as BPDA, and obtain an
adversarial example $\bfx^*$. The learning algorithm uses $\bfx^*$ instead
of $\bfx$.  The theoretical underpinnings of adversarial training can be
found in robust optimization.

\subsection{Metrics and Definitions}
We now provide the necessary definitions that are used in our work.

\begin{definition}[Accuracy]
Accuracy of a classifier $F$ is defined as 
\[
\Pr_{(\bfx,y)\sim \mathcal{D}} [ F(\bfx) = y ]
\]
\end{definition}

\begin{definition}[Local robustness predicate]
  Given an image $\bfx$, the condition for an adversary with budget $\epsilon$  to {\it not succeed}
  is defined as the following predicate:
  \begin{eqnarray*}
    \mathcal{R}_\epsilon (\bfx,y) \equiv \{
    F(T(\mathbf{z})) = y \textrm{~for any~} \|\mathbf{z} - \bfx\| \le \epsilon \}
  \end{eqnarray*}
  We call this predicate the {\em robustness predicate}.
\end{definition}
  
\begin{definition}[Local certificate]
  A predicate $P(\bfx,y)$
  is called a {\it local certificate} iff $P(\bfx,y)$ implies $\mathcal{R}_{\epsilon} (\bfx,y)$.
  In other words, if $P(\bfx,y)$ is true then it provides a proof of the robustness at $\bfx$.
\end{definition}

\begin{definition}[Robustness accuracy]
  The following quantity is called {\it robustness accuracy}
  \begin{align*}
    \begin{split}
    \Pr_{(\bfx,y)\sim \mathcal{D}}[ F(T(\mathbf{z})) = y \textrm{~for any~}
      \|\mathbf{z} - \bfx\| \le \epsilon ] \\
    \; = \;  \Pr_{(\bfx,y)\sim \mathcal{D}}  [ \mathcal{R}_\epsilon (\bfx,y) ]
    \end{split}
  \end{align*}
\end{definition}
Robustness accuracy is the measure of robustness across the entire probability
distribution. This quantity is the accuracy under the strongest attack. However,
we cannot measure it directly in experiments. We instead measure the accuracy
under the attacks we perform as an estimation of robustness accuracy.

In general, any norm can be used for the attack distance metric$\|\cdot\|$ used
in the definitions above. We specifically use the $\ell_{\infty}$ norm
($\|\cdot\|_\infty$) for all discussions and results in this paper.
The $\ell_\infty$ norm is defined as
  \begin{align*}
    \|\mathbf{u}\|_\infty = \max_i |u_i|
  \end{align*}
  where $\mathbf{u}$ is a vector $(u_1,\cdots,u_n)$. Throughout this paper,
  any mention of the norm means $\ell_\infty$ norm.

\subsection{Datasets}

We use a variety of simple and complex datasets to understand the interaction of
pixel discretization and adversarial attacks.

\paragraph{MNIST and Fashion-MNIST}
The MNIST dataset~\cite{mnist} is a large dataset of handwritten digits. Each digit
has 5,000 training images and 1,000 test images. Each image is a 28x28
grayscale with the pixel color-depth of 8 bits.

The MNIST dataset is largely considered a simple dataset, so we also consider a
slightly more complex dataset called Fashion-MNIST~\cite{xiao2017fashion}. Images in this
dataset depicts wearables such as shirts and boots instead of digits. The image
format, the number of classes, as well as the number of examples are all
identical to MNIST.

\paragraph{CIFAR-10}
The CIFAR-10 dataset~\cite{krizhevsky2009learning} is also a dataset of 32x32 color images with ten classes,
each consisting of 5,000 training images and 1,000 test images. The classes
correspond to dogs, frogs, ships, trucks, etc.

\paragraph{GTSRB}
The German Traffic Sign Recognition Benchmark (GTSRB)~\cite{Stallkamp2012} is a dataset of
color images depicting 43 different traffic signs. The images are not of a fixed
dimensions and have rich background and varying light conditions as would be
expected of photographed images of traffic signs. There are about 39,000
training images and 12,000 test images.

Some of the images in the dataset are very dark, making it difficult to identify
the right features in the images.  This makes adversarial perturbations on them
easier. Therefore, we remove all images that have an average intensity of less
than 50.

\paragraph{ImageNet}
ImageNet~\cite{imagenet_cvpr09} is a complex dataset with over 14 million images and twenty
thousand categories. We use the subset of ImageNet used by the NIPS Adversarial
Attacks \& Defenses Challenge~\cite{Nips2017Defense}, which contains 1000
development images and 5000 test images from 1001 categories. 

\subsection{Experiment Settings}
\paragraph{Pre-trained Models}
For MNIST and CIFAR-10, we use naturally and adversarially pre-trained models from ~\cite{madry2017towards}.
For ImageNet, we use a naturally pre-trained InceptionResNet-V2 model from ~\cite{szegedy2017inception}
and an adversarially pre-trained InceptionResNet-V2 model from ~\cite{tramer2017ensemble}.

\paragraph{Training Hyper-parameters.}
For MNIST and CIFAR-10, we use the same training hyper-parameters as ~\cite{madry2017towards}. To train models on Fashion-MNIST, we use the same hyper-parameters as those we use to train MNIST, except we use $20$ steps PGD attack with $\epsilon=0.1$ when we do adversarial training. And the same hyper-parameters as those used on CIFAR-10 are used to train models on GTSRB. In order to reduce training time, when we naturally (or adversarially) retrain models, we use naturally (or adversarially) pre-trained model to initialize model parameters and train for 10000 epochs.

\paragraph{Attack Methods}
If gradient approximation is needed, we use the BPDA attack, otherwise we use the PGD attack.
For MNIST, we set $\epsilon=0.3$ and use $100$ steps for the attack. For Fashion-MNIST, we set $\epsilon=0.1$ and use $100$ steps for the attack.
For CIFAR-10, when we use naturally trained model as base model, we set $\epsilon=2$ and use 40 steps attack; when we use adversarially trained model, we set $\epsilon=8$ and use $40$ steps.
For GTSRB, we set $\epsilon=8$ and use $40$ steps.
For ImageNet, when we use naturally trained model as based model, we set
$\epsilon=1$ and use 1 steps attack; when we use adversarially trained model, we
set $\epsilon=4$ and use $10$ steps for the attack.



\section{robustness of color-depth reduction}
\label{sec:colordepth}
\begin{table*}
  \begin{center}
    \begin{tabular}{  c | c | c | c | c | c | c  }
      \hline
      \multirow{2}{*}{Dataset} & \multirow{2}{*}{Base Model} & \multirow{2}{*}{k} & \multicolumn{2}{c}{Pre-trained} & \multicolumn{2}{|c}{Re-trained} \\ \cline{4-7}
      \multirow{6}{*}{MNIST} & \multirow{4}{*}{\begin{minipage}{0.8in}Naturally Trained Model\end{minipage}} &  & Accuracy  & Robustness & Accuracy & Robustness \\ \hline
                               &    & 2 & 98.76\% & 75.39\% & 99.09\% & 80.91\% \\ \cline{3-7}
                                 &   & 256 & 99.17\% & 0.00\% & \multicolumn{2}{c}{N/A} \\ \cline{2-7}
       & \multirow{2}{*}{\begin{minipage}{0.8in}Adversarially Trained Model\end{minipage}} 
      				& 2 & 98.18\% & 97.32\% & 98.49\% & 92.89\% \\  \cline{3-7}
                              &    & 256 & 98.40\% & 92.72\% & \multicolumn{2}{c}{N/A} \\ \hline \hline
         \multirow{4}{*}{Fashion-MNIST} & \multirow{2}{*}{\begin{minipage}{0.8in}Naturally Trained Model\end{minipage}} 
                                   & 2 & 77.18\% & 38.07\% & 86.44\% & 41.41\% \\ \cline{3-7}
                                 &   & 256 & 91.18\% & 0.00\% & \multicolumn{2}{c}{N/A} \\ \cline{2-7}
       & \multirow{2}{*}{\begin{minipage}{0.8in}Adversarially Trained Model\end{minipage}} 
      				& 2 & 82.68\% & 66.52\% & 86.18\% & 69.99\% \\  \cline{3-7}
                              &    & 256 & 86.26\% & 71.53\% & \multicolumn{2}{c}{N/A} \\ \hline \hline
           \multirow{10}{*}{CIFAR-10} & \multirow{5}{*}{\begin{minipage}{0.8in}Naturally Trained Model\end{minipage}} 
                                      & 2 & 38.30\% & 18.05\% & 80.38\% & 46.44\% \\ \cline{3-7}
                                &    &  8& 83.66\% & 9.12\% & 92.27\% & 28.13\% \\ \cline{3-7}
                                &    & 16& 92.71\% & 6.72\% & 94.22\% & 13.84\% \\ \cline{3-7}
                                &    & 32& 94.40\% & 4.51\% & 94.81\% & 6.10\% \\ \cline{3-7}
                                &    & 256 & 95.01\% & 4.20\% & \multicolumn{2}{c}{N/A} \\ \cline{2-7}
       & \multirow{5}{*}{\begin{minipage}{0.8in}Adversarially Trained Model\end{minipage}} 
      			             & 2 & 74.59\% & 34.30\% & 78.35\% & 38.32\% \\ \cline{3-7}
                                &    &  8& 86.51\% & 46.80\% & 87.24\% & 46.89\% \\ \cline{3-7}
                                &    & 16& 87.09\% & 47.11\% & 87.65\% & 47.18\% \\ \cline{3-7}
                                &    & 32& 87.20\% & 47.20\% & 87.60\% & 47.13\% \\ \cline{3-7}
                                &    & 256 & 87.25\% & 45.50\% & \multicolumn{2}{c}{N/A} \\ \hline \hline    
        \multirow{10}{*}{GTSRB} & \multirow{5}{*}{\begin{minipage}{0.8in}Naturally Trained Model\end{minipage}} 
                                      & 2 & 61.91\% & 32.12\% & 68.70\% & 29.60\% \\ \cline{3-7}
                                &    &  8& 95.06\% & 18.28\% & 95.39\% & 18.47\% \\ \cline{3-7}
                                &    & 16& 97.15\% & 13.88\% & 97.15\% & 14.98\% \\ \cline{3-7}
                                &    & 32& 97.34\% & 9.93\% & 97.43\% & 10.02\% \\ \cline{3-7}
                                &    & 256 & 97.35\% & 7.83\% & \multicolumn{2}{c}{N/A} \\ \cline{2-7}
       & \multirow{5}{*}{\begin{minipage}{0.8in}Adversarially Trained Model\end{minipage}} 
      			             & 2 & 68.43\% & 55.55\% & 70.93\% & 55.30\% \\ \cline{3-7}
                                &    &  8& 93.08\% & 74.46\% & 93.80\% & 74.90\% \\ \cline{3-7}
                                &    & 16& 93.85\% & 75.66\% & 94.96\% & 76.65\% \\ \cline{3-7}
                                &    & 32& 94.02\% & 75.45\% & 94.92\% & 76.20\% \\ \cline{3-7}
                                &    & 256 & 94.11\% & 74.34\% & \multicolumn{2}{c}{N/A} \\ \hline \hline  
         \multirow{10}{*}{ImageNet} & \multirow{5}{*}{\begin{minipage}{0.8in}Naturally Trained Model\end{minipage}} 
                                      & 2 & 50.80\% & 24.40\% & \multicolumn{2}{c}{}  \\ \cline{3-5}
                                &    &  8& 88.10\% & 30.80\% &  \multicolumn{2}{c}{}  \\ \cline{3-5}
                                &    & 16& 92.80\% & 30.60\% &  \multicolumn{2}{c}{}  \\ \cline{3-5}
                                &    & 32& 94.40\% & 27.30\% &  \multicolumn{2}{c}{}  \\ \cline{3-5}
                                &    & 256 & 94.50\% & 27.50\% & \multicolumn{2}{c}{\multirow{2}{*}{N/A}} \\ \cline{2-5}
       & \multirow{5}{*}{\begin{minipage}{0.8in}Adversarially Trained Model\end{minipage}} 
      			             & 2 & 56.20\% & 11.90\% & \multicolumn{2}{c}{} \\ \cline{3-5}
                                &    &  8& 94.00\% & 12.90\% & \multicolumn{2}{c}{} \\ \cline{3-5}
                                &    & 16& 96.10\% & 11.80\% & \multicolumn{2}{c}{} \\ \cline{3-5}
                                &    & 32& 96.40\% & 8.40\% & \multicolumn{2}{c}{} \\ \cline{3-5}
                                &    & 256 & 96.40\% & 5.20\% & \multicolumn{2}{c}{} \\ \hline     
    \end{tabular}
    \caption{Results on MNIST, Fashion-MNIST, CIFAR-10, GTSRB and ImageNet with color-depth reduction. 
      $k$ is the number of bins. $k=256$ means we use all possible codes in the input space, that is we don't discretize images. 
      }
    \label{table:simple-binning-experiments}
  \end{center}
\end{table*}

We begin by discussing a simple pixel discretization technique as described by
Xu et al.~\cite{xu2017feature}, a prototypical example of pixel discretization defenses. They describe their techniques as ``feature
squeezing'', which are preprocessing techniques intended to condense a number of
features into a smaller number of features. The intuition is that by giving an
adversary a smaller space from which to select features, it will make performing
an adversarial attack difficult. One of their feature squeezing techniques is
color-depth reduction. This is a ``simple binning'' approach where, for example, a
256-bin color channel (as represented by 8 bits) may be quantized to a 8-bin color
channel. Color values are essentially "rounded" to their nearest color bins.

On an image $\bfx$, we can do $k$-bin discretization via the following function:  
\begin{align*}
  T(\bfx)[i,j,l] = \frac{\lfloor (k-1)\bfx[i,j,l]+0.5\rfloor}{k-1}
\end{align*}
Where, $i\in [H]$, $j\in [W]$ and $l \in [C]$. Here, $[H] = \{1,2,\cdots,H\}$. 
This assumes that the pixel values of the input image have been scaled to be in
$[0,1]$. 

\noindent\textbf{Evaluation.} 
As mentioned by Athalye et al. in their BPDA attack paper, color-depth reduction
can be attacked by approximating the preprocessing function $T$ with the
identity function. Below we reproduce these results with a more sophisticated
differentiable approximation of $T$ and provide details of the robustness on
different datasets. We will also use this approximation later
in the paper when we discuss more sophisticated data-specific pixel discretization.

In our evaluation of color-depth reduction using the BPDA method, we compute
$F(T(\bfx))$ in the forward pass, while for the backward pass, we replace $T(\bfx)$ by
$g(\bfx)$, which is:
\begin{align*}
g(\bfx)[i,j,l] = \frac{\sum_{t=1}^{k} c_t \cdot e^{-\alpha
    \|\bfx[i,j,l]-c_t\|}}{\sum_{t=1}^{k} e^{-\alpha \|\bfx[i,j,l]-c_t\|}}.
\end{align*}
where $i\in [H]$, $j\in [W]$, $l \in [C]$ and $c_t = \frac{t-1}{k-1}$.
Note that when $\alpha \to \infty$, $g(\bfx)=T(\bfx)$.
For MNIST and Fashion-MNIST, we set  $\alpha=10$. For other datasets, we set $\alpha=0.1$.
To evaluate classifier's robustness without discretization, we use the PGD attack.
We define natural accuracy, or simply accuracy, as the accuracy on clean data.
Similarly, robustness accuracy, which we denote as robustness,
is the accuracy under attack.

\noindent\textbf{Results.} We present results on MNIST, Fashion-MNIST, CIFAR-10, 
GTSRB and ImageNet datasets as shown in Table~\ref{table:simple-binning-experiments}. Our intention is to see if we can achieve sufficient
robustness while maintaining accuracy on these datasets through color-depth
reduction.

For MNIST, as Xu et al. showed, we obtain good accuracy with
just 2 bins. As can be seen from the table, we obtain a substantial improvement in both the
naturally trained model and the adversarially trained model. 
Although Fashion-MNIST and MNIST are widely considered as similar datasets, the results for Fashion-MNIST are not as good as MNIST, 
which means color-depth reduction cannot always work on this kind of datasets.
We also obtain results on CIFAR-10, GTSRB and ImageNet. We present detailed results with different bins. As can be
seen from the table, color-depth reduction does not substantially improve
robustness under whitebox attacks.

The difference in robustness results for MNIST dataset and for
other datasets is due to the complexity of the datasets. In Section~\ref{sec:analysis}, we
will further investigate this issue. We next try data-specific pixel
discretization, which may possibly improve robustness over complex datasets.

\section{Data-specific Pixel Discretization}
\label{sec:dataspecific}

The fact that the simple binning does not work naturally leads to the question whether one can design more sophisticated discretization schemes with better performance. In particular, a reason why simple binning fails is that it does not take into account the properties of the data. Consider the following intentionally simplified setting: each image has just one pixel $x$ taking values in $[0, 1]$, and images of class $0$ lie close to $0.6$ and those of class $1$ lie close to $1$. Using simple binning with codewords $\{0, 1\}$ then fails, as the images will all be discretized to $1$. This example then motivates more sophisticated approaches that takes into account the distribution of the data, so as to
improve robustness over simple binning by color-depth reduction.
Our approach in this section is data-specific: we aim to
discretize pixels in a way that takes the density of pixels in the dataset into
account. We wish to derive a codebook of pixels from the dataset, which is used
by $T$ that replaced each pixel with its nearest codeword under a suitable
distance metric. 

We begin by describing our framework for codebook construction and then present
our experimental results on data-specific discretization.

\subsection{Preprocessing Framework and Codebook Construction}
\label{sec:framework-and-codebook-construction}
At the high level, our framework is very simple and has the following two steps:
\begin{enumerate}
\item At training time, we construct a codebook $\mathcal{C}=\{c_1,c_2,\cdots,c_k\}$ for some $k$,
  where $c_i$ is in the pixel space.
  This codebook $\mathcal{C}$ is then fixed during test time.
\item Given an image $\bfx$ at test time, $T$ replaces each pixel $\bfx[i, j]$ with a codeword
  $c \in \mathcal{C}$ that has minimal distance to $\bfx[i,j]$.
\end{enumerate}

Intuitively, on one hand, we would like to have a set of codes that are far apart from each other, similar to the requirements of error correcting codes in coding theory, such that even after adversarial perturbation the discretized result will not change. On the other hand, we would like to lose little information so that we can keep a high accuracy after discretization. 
More precisely, a set of good codes should satisfy the following properties:
\begin{itemize}
\item {\it Separability}. Pairwise distances of codewords is large enough
  for certain distance metric.
\item {\it Representativity}. There exists a classifier $F$ that has good accuracy on the discretized data
  based on $\mathcal{C}$, as described by the framework above.
\end{itemize}

Therefore, one may want to apply common clustering algorithms, such as $k$-means and $k$-medoids,
to find such separable codes. Note however, that these algorithms do not make a
guarantee of separability. In search for separability, we therefore try both
$k$-medoids and another algorithm that we develop based on density estimation.
We next describe the two algorithms.

\subsubsection{Density Estimation-based Algorithm}
We devise a new algorithm to construct separable codes based on density estimation and greedy selection
on all the pixels $\mathcal{P}$ in the training data. This algorithm is described in Algorithm~\ref{alg:find}.
This algorithm takes as input a set of images $\mathcal{D}$, a kernel function for density estimation,
and number of codes $k$ and a distance parameter $r$. It repeats $k$ times and at each time,
first estimates the densities for all pixel values, then adds the one with highest density to the codebook,
and finally, remove all pixels within $r$ distance of the picked.

\begin{algorithm}[t]
  \caption{\textsc{Deriving Codes via Density Estimation}}
  \label{alg:find}
  \begin{algorithmic}[1]
    \REQUIRE A training dataset $\mathcal{D}$, distance parameter $r$ and number of codewords $k$, a kernel function $\rho(\cdot, \cdot)$.
    \ENSURE a set of codewords $\mathcal{C}=\{c_1,c_2,\cdots,c_k\}$.
    \STATE Let $\mathcal{P}$ denote all the pixels from images in $\mathcal{D}$.
    \FOR{$i=1,2,\cdots,k$}
    \STATE For each pixel value $v$, estimate its (unnormalized) density as
    $h[v] = \sum_{p \in \mathcal{P}} \rho(p, v)$.
    \STATE Set $c_i \leftarrow \arg\max_{v} h[v]$, and
    $\mathcal{P} \leftarrow \mathcal{P}-\{p\in \mathcal{P} : \|p-c_i\| \leq r \}$.
    \ENDFOR
  \end{algorithmic}
\end{algorithm}

\noindent\textit{Instantiation}. There are many possible kernel functions we can use to instantiate
Algorithm~\ref{alg:find}. In this work we use the simplest choice, the identity kernel,
$\rho(p_1, p_2) = 1$ if $p_1 = p_2$ and $0$ otherwise. In that case, the density estimation at line 3
above becomes counting the frequencies of a pixel $v$ in $\mathcal{P}$. The values of $k$ and $r$ can be tuned and we will report results for different choices of these parameters. 

\subsubsection{$k$-Medoids Algorithm} 
The $k$-medoids algorithm is similar to the $k$-means algorithm and aims to
minimize the distance between points inside a cluster to the cluster centroid,
a point in the cluster, which is representative of the entire cluster. The
parameter $k$ is an input to the algorithm (a hyperparameter). The algorithm
works as follows: we initially select $k$ points from the given data points as
the medoids.  Each point is then associated with the closest medoid. The cost of
the association is calculated as the sum of pairwise distance between each point
and its medoid. The algorithm then iteratively continues by
switching a medoid with another data point and checking if this switch of
medoids reduces the cost. See Algorithm~\ref{alg:find_kmedoid} for the details.

Typically, the $\ell_1$ distance, also called the Manhattan distance, is used as
the distance metric, and is also the distance used in our experiments. 
If $\ell_2$ distance (the Euclidean distance) is used, the algorithm reduces to the commonly
named $k$-median algorithm. A third option is $\ell_\infty$ distance. Experimental
results show that using $\ell_2$ and $\ell_\infty$ distances lead to similar
performance as $\ell_1$, so we only report those results for $\ell_1$. 

Finally, one can use the popular $k$-means algorithm for constructing the codewords. However, it is known to be sensitive to outliers, and indeed leads to worse performance in our experiments. So we do not include experimental results using the $k$-means algorithm. 

\begin{algorithm}[t]
  \caption{\textsc{Deriving Codes via $k$-Medoids}}
  \label{alg:find_kmedoid}
  \begin{algorithmic}[1]
    \REQUIRE A training dataset $\mathcal{D}$, number of codewords $k$, number of iterations $T$, a distance function $d(\cdot, \cdot)$.
    \ENSURE a set of codewords $\mathcal{C}=\{c_1,c_2,\cdots,c_k\}$.
    \STATE Let $\mathcal{P}$ denote all the pixels from images in $\mathcal{D}$.
		\STATE For any set of codewords $\mathcal{C}$, define the $k$-medoid cost (w.r.t.\ to the distance function $d$) as 
		$$
		\text{cost}(\mathcal{P}, \mathcal{C}) = \sum_{p\in \mathcal{P}} \min_{c' \in \mathcal{C}} d(p, c').
		$$
		\STATE Randomly pick $k$ pixels as the initial medoids $\mathcal{C}=\{c_1,c_2,\cdots,c_k\}$.
    \FOR{$t=1,2,\cdots,T$}
		\FOR{Each pixel $p \not\in \mathcal{C}$ and each $c \in \mathcal{C}$}
		\STATE Let $\mathcal{C}_{c,p} = \mathcal{C} \setminus \{c\} \cup \{p\}$
		\IF{$\text{cost}(\mathcal{P}, \mathcal{C}_{c,p}) < \text{cost}(\mathcal{P}, \mathcal{C})$}
		\STATE Set $\mathcal{C} \leftarrow \mathcal{C}_{c,p}$
		\ENDIF
		\ENDFOR
    \ENDFOR
  \end{algorithmic}
\end{algorithm}


We now present experimental results using the  algorithms using density estimation and $k$-medoids.

\subsection{Experimental Results}
\label{sec:exp}

\begin{figure*}[th]
  \centering
  \subfloat{\includegraphics[width=0.5\linewidth]{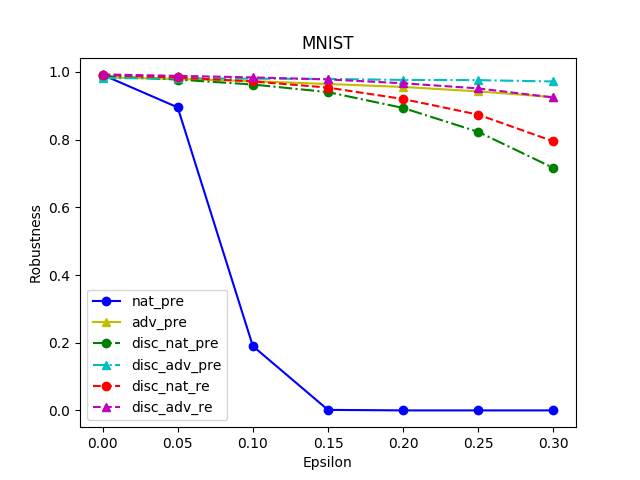} \label{fig:mnist_results}}
  \subfloat{\includegraphics[width=0.5\linewidth]{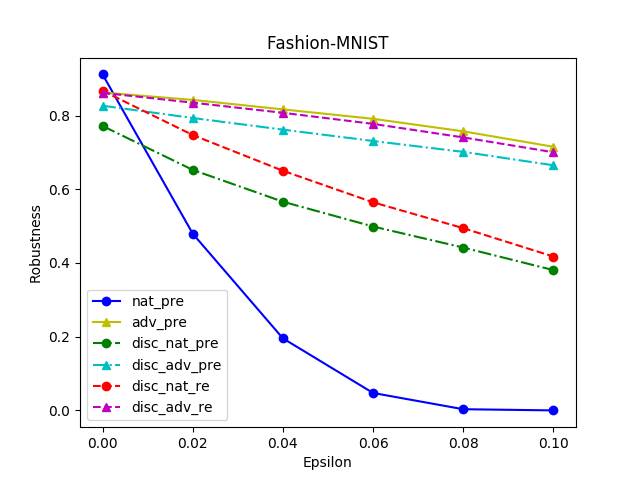} \label{fig:fashion_mnist_results}}
  \caption{
    Results on MNIST and Fashion-MNIST using 2 codes under 100 steps attack.
     The method significantly improves robustness on naturally pre-trained model and adversarially pre-trained model.}
  \label{fig:mnist_fashion_mnist}
\end{figure*}

\begin{figure*}[th]
  \centering
  \subfloat[]{\includegraphics[width=0.5\linewidth]{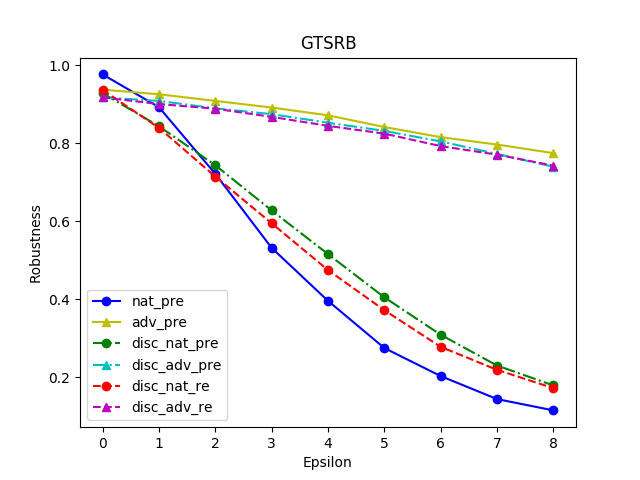} \label{fig:gtsrb_results}}
  \subfloat[]{\includegraphics[width=0.5\linewidth]{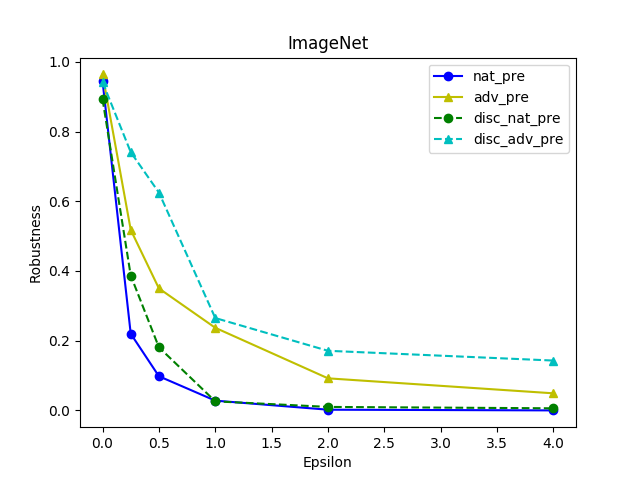} \label{fig:imagenet_results}}
  \caption{
  Results on GTSRB and ImageNet. 
  (a) Results on GTSRB using 50 codes under 40 steps attack. On the
    naturally trained model, our method could improve its robustness. However, there is little improvement in robustness on the adversarially pretrained model. (b) The results on ImageNet using 300 codes under 10 steps attack. The method slightly improves robustness on either naturally pre-trained models or adversarially pre-trained models.}
  \label{fig:gtsrb_imagenet}
\end{figure*}

\begin{figure*}[th]
  \centering
  \subfloat{\includegraphics[width=0.5\linewidth]{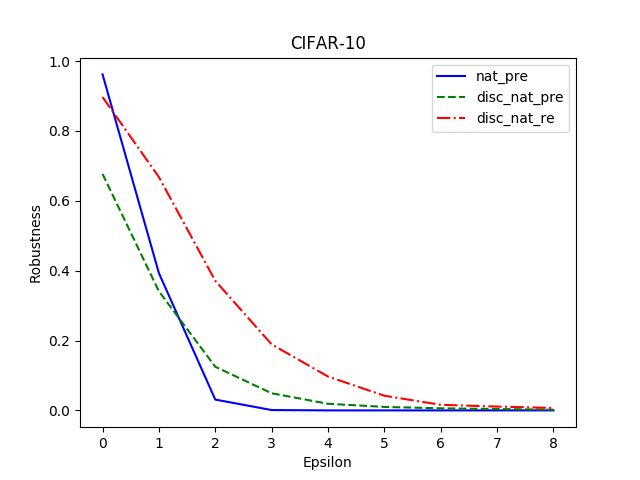} \label{fig:cifar_results}}
	\subfloat{
  \includegraphics[width=0.5\linewidth]{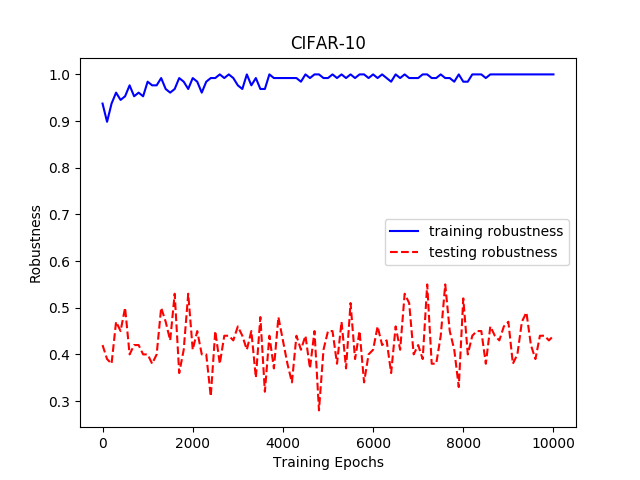} }
  \caption{
    Results on CIFAR-10 using 50 codes under 40 steps attack. 
    (a) On the naturally trained model, the method could improve its robustness. On the adversarially trained model, there is little improvement so we omit the plot. Instead, we plot the adversarial gap underlying this failure.
    (b) Adversarial gap phenomenon on CIFAR-10. 
		When we adversarially retrain the model on discretized images, we can achieve nearly $100\%$ robustness in training but no higher than $50\%$ in test.
    This is referred to as an adversarial generalization gap~\cite{schmidt2018adversarially}. 
		}\label{fig:cifar_gap_results}
\end{figure*}

We now present our evaluation of data-specific discretization. Since
these approaches are more sophisticated than simple color-depth reduction, it is
possible for these approaches to yield robust behavior for a wide range of
datasets. To summarize this
section, however, our key findings are negative: data-specific discretization
techniques do not provide much robustness to complex datasets such as CIFAR-10, GTSRB
and ImageNet. 

Our experimental setup is similar to that of the
previous experiments. We use the same differentiable approximation to $T$, with
$c_i$ this time being the codewords derived from our algorithms and performing operations 
on pixel space.
Our experiments consist of the following three parts:
\begin{itemize}
  \item We would like to evaluate the effectiveness of the specifically designed code construction Algorithm~\ref{alg:find} on different datasets. We would also like to check the variation of robustness as we change the adversarial budget $\epsilon$.
  \item As in the color-depth reduction experiments, we would also like to understand how
    the preprocessing technique provides robustness on different datasets
    quantitatively as we vary the number of codewords.
  \item Finally, we would like to see how the $k$-Medoid algorithm for codebook construction affects the performance. 
\end{itemize}

\noindent\textbf{Effectivenss.} We now evaluate the robustness obtained by pixel
discretization using the codebook constructed by Algorithm~\ref{alg:find} on
different datasets. We consider six settings:
\begin{enumerate}
  \item nat\_pre: no defenses, model naturally trained on original data;
  \item adv\_pre: no defenses, model adversarially trained on original data;
  \item disc\_nat\_pre: discretization defense + model naturally trained on original data;
  \item disc\_adv\_pre: discretization defense + model adversarially trained on original data;
  \item disc\_nat\_re: discretization defense + model naturally trained on preprocessed data;
  \item disc\_adv\_re: discretization defense + model adversarially trained on preprocessed data.
\end{enumerate}
We further vary the adversarial budget $\epsilon$ to give a more complete evaluation of the method.

Figure~\ref{fig:mnist_fashion_mnist}, \ref{fig:gtsrb_imagenet}, and \ref{fig:cifar_gap_results} show the results on the five datasets, with varying $\epsilon$.
We observe that on MNIST, the method improves over the naturally trained model significantly, and further improves the adversarially trained models. 
On Fashion-MNIST, the method also improves over the naturally trained model significantly, but doesn't further improves the adversarially trained models. 
The improvements
with discretization are however still minor for datasets other than MNIST and Fashion-MNIST. 

We also confirm the so-called adversarial generalization gap phenomenon previously reported in~\cite{schmidt2018adversarially}, that is, 
there is a big gap between training and test robustness accuracy. This suggests there is not sufficient data for improving the robustness. 
The right subfigure in Figure~\ref{fig:cifar_gap_results} shows this gap on CIFAR-10.

In general, the results suggest that for complex
datasets (such as CIFAR-10, GTSRB, and ImageNet), it is difficult to achieve robustness via pixel discretization.
This is potentially because the separability and representativity conditions of
codewords cannot be satisfied simultaneously due to the data lacking good
structure and the base model having peculiar properties. 
In Sections~\ref{sec:analysis}, we will study in detail the failure of the method.

\begin{table*}
	\begin{center}
		\begin{tabular}{  c | c | c | c | c | c | c | c  }
			\hline
			\multirow{2}{*}{Dataset} & \multirow{2}{*}{Base Model} & \multirow{2}{*}{k} & \multirow{2}{*}{r} & \multicolumn{2}{c}{Pre-trained} & \multicolumn{2}{|c}{Re-trained} \\ \cline{5-8}
			\multirow{6}{*}{MNIST} & \multirow{4}{*}{\begin{minipage}{0.8in}Naturally Trained Model\end{minipage}} & & & Accuracy  & Robustness & Accuracy & Robustness \\ \hline
			&    & 2 & 0.6 & 98.81\% & 75.35\% & 99.15\% & 80.91\% \\ \cline{3-8}
			&   & 256 & N/A & 99.17\% & 0.00\% & \multicolumn{2}{c}{N/A} \\ \cline{2-8}
			& \multirow{2}{*}{\begin{minipage}{0.8in}Adversarially Trained Model\end{minipage}} 
			& 2  & 0.6 & 98.17\% & 97.24\% & 98.56\% & 92.80\% \\  \cline{3-8}
			&    & 256 & N/A & 98.40\% & 92.72\% & \multicolumn{2}{c}{N/A} \\ \hline \hline
			\multirow{4}{*}{Fashion-MNIST} & \multirow{2}{*}{\begin{minipage}{0.8in}Naturally Trained Model\end{minipage}} 
			& 2 & 0.2 & 77.18\% & 38.07\% & 86.74\% & 41.75\% \\ \cline{3-8}
			&   & 256 & N/A & 91.18\% & 0.00\% & \multicolumn{2}{c}{N/A} \\ \cline{2-8}
			& \multirow{2}{*}{\begin{minipage}{0.8in}Adversarially Trained Model\end{minipage}} 
			& 2 & 0.2 & 82.68\% & 66.52\% & 86.18\% & 70.05\% \\  \cline{3-8}
			&    & 256 & N/A & 86.26\% & 71.53\% & \multicolumn{2}{c}{N/A} \\ \hline \hline
			\multirow{16}{*}{CIFAR-10} & \multirow{8}{*}{\begin{minipage}{0.8in}Naturally Trained Model\end{minipage}} 
			& 2 & 64 & 24.20\% & 15.90\% & 68.50\% & 46.70\% \\ \cline{3-8}
			&    &  5 & 64 & 39.60\% & 18.70\% & 79.90\% & 46.00\% \\ \cline{3-8}
			&    & 10 & 64 & 42.70\% & 16.40\% & 81.60\% & 43.40\% \\ \cline{3-8}
			&    & 50 & 48 & 67.70\% & 12.50\% & 89.70\% & 37.10\% \\ \cline{3-8}
			&    & 100 & 16 & 86.30\% & 7.60\% & 92.80\% & 23.90\% \\ \cline{3-8}
			&    & 300 & 16 & 91.30\% & 7.60\% & 94.40\% & 20.50\% \\ \cline{3-8}
			&    & 500 & 16 & 91.21\% & 9.01\% & 93.53\% & 19.81\% \\ \cline{3-8}
			&    & $256^3$  & N/A & 95.01\% & 4.20\% & \multicolumn{2}{c}{N/A} \\ \cline{2-8}
			& \multirow{8}{*}{\begin{minipage}{0.8in}Adversarially Trained Model\end{minipage}} 
			& 2 & 64 & 63.20\% & 24.40\% & 68.30\% & 33.20\% \\ \cline{3-8}
			&    &  5 & 64 & 76.10\% & 33.40\% & 78.70\% & 40.20\% \\ \cline{3-8}
			&    & 10 & 64 & 78.30\% & 34.70\% & 81.00\% & 43.80\% \\ \cline{3-8}
			&    & 50 & 48 & 84.30\% & 39.50\% & 84.80\% & 43.80\% \\ \cline{3-8}
			&    & 100 & 16 & 86.00\% & 42.50\% & 85.40\% & 44.40\% \\ \cline{3-8}
			&    & 300 & 16 & 87.30\% & 46.90\% & 86.80\% & 46.60\% \\ \cline{3-8}
			&    & 500 & 16 & 86.99\% & 46.77\% & 87.60\% & 46.84\% \\ \cline{3-8}
			&    & $256^3$ & N/A & 87.25\% & 45.50\% & \multicolumn{2}{c}{N/A} \\ \hline \hline    
			\multirow{16}{*}{GTSRB} & \multirow{8}{*}{\begin{minipage}{0.8in}Naturally Trained Model\end{minipage}} 
			& 2 & 64 & 44.02\% & 23.94\% & 55.39\% & 29.46\% \\ \cline{3-8}
			&    &  5 & 64 & 66.65\% & 21.11\% & 80.55\% & 25.42\% \\ \cline{3-8}
			&    & 10 & 64 & 80.51\% & 27.80\% & 84.75\% & 23.87\% \\ \cline{3-8}
			&    & 50 & 48 & 90.01\% & 26.42\% & 91.48\% & 23.12\% \\ \cline{3-8}
			&    & 100 & 16 & 96.62\% & 16.45\% & 96.45\% & 17.91\% \\ \cline{3-8}
			&    & 300 & 16 & 96.80\% & 16.22\% & 96.84\% & 17.08\% \\ \cline{3-8}
			&    & 500 & 16 & 96.82\% & 15.49\% & 96.80\% & 16.25\% \\ \cline{3-8}
			&    & $256^3$  & N/A & 97.35\% & 7.83\% & \multicolumn{2}{c}{N/A} \\ \cline{2-8}
			& \multirow{8}{*}{\begin{minipage}{0.8in}Adversarially Trained Model\end{minipage}} 
			& 2 & 64 & 54.79\% & 43.18\% & 58.53\% & 43.94\% \\ \cline{3-8}
			&    &  5 & 64 & 74.56\% & 53.88\% & 81.43\% & 58.52\% \\ \cline{3-8}
			&    & 10 & 64 & 85.83\% & 66.84\% & 85.93\% & 67.36\% \\ \cline{3-8}
			&    & 50 & 48 & 91.09\% & 74.26\% & 91.34\% & 74.48\% \\ \cline{3-8}
			&    & 100 & 16 & 93.97\% & 74.74\% & 94.72\% & 76.06\% \\ \cline{3-8}
			&    & 300 & 16 & 93.96\% & 75.31\% & 94.78\% & 76.18\% \\ \cline{3-8}
			&    & 500 & 16 & 93.98\% & 75.54\% & 94.73\% & 76.66\% \\ \cline{3-8}
			&    & $256^3$ & N/A & 94.11\% & 74.34\% & \multicolumn{2}{c}{N/A} \\ \hline \hline     
			\multirow{14}{*}{ImageNet} & \multirow{7}{*}{\begin{minipage}{0.8in}Naturally Trained Model\end{minipage}} 
			& 10 & 64 & 53.90\% & 23.00\% & \multicolumn{2}{c}{}  \\ \cline{3-6}
			&    &  50 & 48 & 77.10\% & 28.50\% &  \multicolumn{2}{c}{}  \\ \cline{3-6}
			&    & 100 & 16 & 88.40\% & 32.40\% &  \multicolumn{2}{c}{}  \\ \cline{3-6}
			&    & 200 & 16 & 89.50\% & 32.80\% &  \multicolumn{2}{c}{}  \\ \cline{3-6}
			&    & 300 & 16 & 89.30\% & 34.80\% &  \multicolumn{2}{c}{}  \\ \cline{3-6}
			&    & 500 & 16 & 92.10\% & 34.10\% &  \multicolumn{2}{c}{}  \\ \cline{3-6}
			&    & $256^3$  & N/A & 94.50\% & 27.50\% & \multicolumn{2}{c}{\multirow{2}{*}{N/A}} \\ \cline{2-6}
			& \multirow{7}{*}{\begin{minipage}{0.8in}Adversarially Trained Model\end{minipage}} 
			& 10 & 64 & 62.60\% & 7.50\% & \multicolumn{2}{c}{} \\ \cline{3-6}
			&    &  50 & 48 & 85.30\% & 11.20\% & \multicolumn{2}{c}{} \\ \cline{3-6}
			&    & 100 & 16 & 92.80\% & 12.00\% & \multicolumn{2}{c}{} \\ \cline{3-6}
			&    & 200 & 16 & 94.00\% & 13.30\% & \multicolumn{2}{c}{} \\ \cline{3-6}
			&    & 300 & 16 & 94.20\% & 14.10\% & \multicolumn{2}{c}{} \\ \cline{3-6}
			&    & 500 & 16 & 94.80\% & 16.00\% &  \multicolumn{2}{c}{}  \\ \cline{3-6}
			&    & $256^3$  & N/A & 96.40\% & 5.20\% & \multicolumn{2}{c}{} \\ \hline     
		\end{tabular}
		\caption{Results on MNIST, Fashion-MNIST, CIFAR-10, GTSRB and ImageNet with data-specific pixel discretization. We derive codes via density estimation.  
			$k$ and $r$ are hyper-parameters used to find the codes. $k=256$ in MNIST and Fashion-MNIST and $k=256^3$ in CIFAR-10, GTSRB and ImageNet mean we use all possible codes in the input space, that is we don't discretize images. 
		}
		\label{table:data-specific-discretization-de-experiments}
	\end{center}
\end{table*}

\begin{table*}
  \begin{center}
    \begin{tabular}{  c | c | c | c | c | c | c  }
      \hline
      \multirow{2}{*}{Dataset} & \multirow{2}{*}{Base Model} & \multirow{2}{*}{k} & \multicolumn{2}{c}{Pre-trained} & \multicolumn{2}{|c}{Re-trained} \\ \cline{4-7}
      \multirow{4}{*}{MNIST} & \multirow{3}{*}{Naturally Trained Model} &  & Accuracy  & Robustness & Accuracy & Robustness \\ \hline
                               &    & 2 & 98.81\% & 75.43\% & 98.99\% & 80.66\% \\ \cline{2-7}
       & \multirow{1}{*}{Adversarially Trained Model} 
      				& 2 & 98.17\% & 97.20\% & 98.50\% & 92.20\% \\  \hline \hline
         \multirow{2}{*}{Fashion-MNIST} & \multirow{1}{*}{Naturally Trained Mode} 
                                   & 2 & 82.45\% & 48.44\% & 87.86\% & 51.53\% \\ \cline{2-7}
       & \multirow{1}{*}{Adversarially Trained Model} 
      				& 2 & 84.84\% & 73.63\% & 87.58\% & 74.32\% \\  \hline \hline
           \multirow{2}{*}{CIFAR-10} & \multirow{1}{*}{Naturally Trained Model} 
                                      & 300&  92.73\% & 8.56\% & 93.91\% & 17.08\% \\ \cline{2-7}
       & \multirow{1}{*}{Adversarially Trained Model} 
                                     & 300& 86.97\% & 46.68\% & 87.53\% & 47.11\% \\ \hline \hline
        \multirow{2}{*}{GTSRB} & \multirow{1}{*}{Naturally Trained Model} 
                                      & 300& 97.31\% & 14.61\% & 97.39\% & 15.02\% \\ \cline{2-7}
       & \multirow{1}{*}{Adversarially Trained Model} 
                                      & 300& 93.98\% & 75.56\% & 95.06\% & 76.46\% \\ \hline \hline
         \multirow{2}{*}{ImageNet} & \multirow{1}{*}{Naturally Trained Model} 
                                     & 300 & 92.30\% & 33.80\% &  \multicolumn{2}{c}{\multirow{2}{*}{N/A}}  \\ \cline{2-5}
       & \multirow{1}{*}{Adversarially Trained Model} 
                                       & 300 & 94.90\% & 12.20\% &  \multicolumn{2}{c}{}  \\ \hline                        
    \end{tabular}
    \caption{Results on MNIST, Fashion-MNIST, CIFAR-10, GTSRB and ImageNet with data-specific pixel discretization. We derive codes via k-medoids. 
      $k$ is the number of medoids.}
    \label{table:data-specific-discretization-k-medoids-experiments}
  \end{center}
\end{table*}

\noindent\textbf{Effect of Number of Codewords.}
Given the results, one may wonder whether the performance can be improved by tuning the parameters of the code construction Algorithm~\ref{alg:find}, in particular, the number of codewords $k$.
Here, we study the relationship between number of codewords and accuracy and robustness.
Table~\ref{table:data-specific-discretization-de-experiments} shows the results on MNIST, Fashion-MNIST, CIFAR-10, GTSRB and ImageNet dataset. 
For MNIST and Fashion-MNIST, we only report results using 2 codewords since we can achieve good accuracy and robustness using 2 codewords. 
For other dataset, we report results on a range of $k$. 


From the results, we can know without attacks, high accuracies can be achieved with only a few number of codewords (e.g., 100), especially when models are retrained.
On naturally trained models, with fewer codewords, we gain more robustness. This is because the distances between the codewords are larger and it is harder for the attacker to change the discretization results. 
On adversarially trained models, an increasing number of codewords leads to better robustness. This is different from the naturally trained case, potentially because the the data points are further away from the decision boundary of adversarially trained models, and thus increasing number of codes does not make it easier for the attacker to change the discretization outcome while giving more representativity leading to better results. This also means that discretization does not lead to significantly change in the robustness. This is confirmed by observing that adversarially trained models without discretization can get $45.60\%$ robustness on CIFAR-10. This is further investigated in Section~\ref{sec:analysis}.
  
\noindent\textbf{Using $k$-Medoids.}
Finally, we also try $k$-medoids algorithm for generating the codebook. As mentioned, 
using $\ell_2$ or $\ell_\infty$ distances leads to similar performance as $\ell_1$, so we report only the results for $\ell_1$ in Table~\ref{table:data-specific-discretization-k-medoids-experiments}. As is clear from the table, the results are on par with the results when using the
density estimation algorithm on all datasets except Fashion-MNIST. The results on Fashion-MNIST show that k-medoids can derive better codewords, which supports our argument that discretization should take into account the properties of data. 




\section{Analysis of Results}
\label{sec:analysis}

In this section we address the following question: {\it when does
  pixel discretization work?}  Experimental results show that the
pixel discretization defense methods can achieve strong performance on
the MNIST data set and alike, but fail on more complex data sets, such
as CIFAR-10 and ImageNet. A better understanding of the conditions
under which the simple framework will succeed can help us identify
scenarios where our technique is applicable, and also provide insights
on how to design better defense methods.

We aim at obtaining better insights into its success by the pursuing
two lines of studies. First, we propose a theoretical model of the
distribution of the pixels in the images and then prove that in this
model the pixel discretization provably works. Second, inspired by our
theoretical analysis, we propose a method to compute a certificate for
the performance of the defense method on given images and a given
adversarial budget. This certificate is the lower bound of the
performance for \emph{any} attacks using a given adversarial
budget. If the certified robust accuracy is high, then it means that
the defense is successful for any attack within the budget, not just
for existing. This is much desired in practice, especially considering
that new and stronger attacks are frequently recently proposed for
DNNs. The certificate, as a lower bound, also allows a rough
estimation of the performance, when combined with the robust accuracy
under the currently available attacks which is an upper bound.

\subsubsection{An idealized model and its analysis}
To garner additional insights, we propose and analyze an idealized
model under which we can precisely analyze when the adversarial
robustness can be improved by pixel discretization method using
codewords constructed by Algorithm~\ref{alg:find}.

In the center of our analysis is a generative model for images, i.e.,
a probabilistic model of the distributions of the images.  Roughly
speaking, in this model, we assumes that there exists some
``ground-truth'' codewords that are well separated, and the pixels of
the images are slight perturbations of these codewords, and thus form
well separated clusters. This idealized model of the images are
directly inspired by the known clustering structure of the pixels in
the MNIST data set, i.e., most pixels from MNIST are either close to
$1$ or close to $0$. Given such a clustering structure, the codeword
construction algorithm (Algorithm~\ref{alg:find}) can find codewords
that are good approximations of the ground-truth codewords, and the
adversarial attacks cannot change the discretization results much.  In
summary, our analysis results suggest the following: Suppose that data
is good in the sense that it can be ``generated'' using some
``ground-truth'' codewords that are sufficiently well separated; then,
as long as we can find a $\nu$-approximation for the ground truth
codewords and we have a base model $F$ that is robust with respect to
the $\nu$-budget, it follows $F(T(\cdot))$ is immune to any
adversarial attack with a small budget $\epsilon \le 5\nu$, thus
providing a boost of adversarial robustness. Or in other words,
pre-processing provides a $5$x boost on adversarial robustness.  We
now present details.

\noindent\textbf{An idealized generative model of images.}  Each image
$\bfx$ is viewed as a $d$-dimensional array (a typical image of width
$W$ and height $H$ can be flattened into an array of dimension $W
\times H$).  Suppose each pixel $\bfx[i]$ is $3$-dimensional vector of
discrete values in $[K]$.  Assume that there is a
set of ground-truth codewords $\mathcal{C}^*=\{c^*_1, \ldots,
c^*_k\}$, where the codewords $c^*_i \in [K]^3$ lies in the same space
as the pixels. Also, assume that the codewords are well separated so
that
$
  \|c^*_i - c^*_j\| \ge \Gamma
$
 for some large $\Gamma$.

Now we specify the generative process of an image.  Each image is
generated in two steps, by first generating a ``skeleton image''
$\mathbf{u}$ where each pixel is a codeword, and then adding noise to
the skeleton. We do not make assumptions about the distribution of the
label $y$.  Formally,
\begin{enumerate}
\item $\mathbf{u}$ is generated from some distribution over $(\mathcal{C}^*)^d$,
  where the marginal distributions satisfy
	$$
	\frac{1}{d}\sum_{j\in [d]} \text{Pr}(u_j = c^*_i) = \frac{1}{k},\forall i \in [k].
	$$
\item $\bfx[i] = \mathbf{u}[i] + \zeta_i$, where $\zeta_i$ takes valid
  discrete values so that $\bfx[i] \in [K]^3$, and
  \begin{align*}
    \|\zeta_i\| \le \Gamma/8, ~\text{and}~ \text{Pr}[\|\zeta_i\| = t] = \alpha \exp(-t^2/\sigma^2)
  \end{align*}
  where $t$ takes valid discrete values, $\sigma$ is a parameter,
  and $\alpha$ is a normalization factor. 
\end{enumerate}
We would like to make a few comments about our generative
model. First, the assumption on the skeleton $\mathbf{u}$ is very
mild, since the only requirement is that the probability of seeing any
codeword is the same $1/k$, i.e., randomly pick a pixel in
$\mathbf{u}$ and it is equal to any codeword with the same
probability. This is to make sure that we have enough pixels coming
from any codeword in the training data, so this condition can be
relaxed to $\frac{1}{d}\sum_{j\in [d]} \text{Pr}(u_j = c^*_i) \ge
\epsilon_c$ for a small $\epsilon_c$. We set $\epsilon_c = 1/k$ for
simplicity of the presentation.

For the second step of the generative process, the assumptions that
the noise $\zeta_i$ takes discrete values and that $\|\zeta_i\| \le
\Gamma/8$ are also for simplicity. The actually needed assumption is
that with high probability, the noise is small compared to $\Gamma$,
the separation between the ground-truth codewords.

\noindent\textbf{Quantifying robustness.}  We now prove our main
theoretical result in the idealized model.

As a first step, we first show that the codewords constructed are
quite close to the ground-truth. Formally, we say that a set of
codewords $\{c_1, \dots, c_k\}$ is a \emph{$\nu$-approximation} of
$\mathcal{C}^*$ if for any $i$, $\| c_i - c_i^*\| \le \nu$. For the
above generative model, one can show that the codewords found by
Algorithm~\ref{alg:find} are $\nu$-approximation of the ground truth
$\mathcal{C}^*$.

\begin{lem}
  \label{lem:ideal}
  Let $N$ denote the number of pixels in the training set. For any
  $\delta > 0$, if $\Gamma > 16\nu$ where $\nu$ is defined as in
  Proposition~\ref{prop:ideal}, then with probability at least
  $1-\delta$, Algorithm~\ref{alg:find} with $r=2\nu$ outputs a set of
  codes $\mathcal{C} = \{c_1,\ldots, c_k\}$ such that $\|c_i - c^*_i\|
  \le \nu, \forall i$.
\end{lem}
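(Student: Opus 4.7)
The plan is to show that on a single high-probability event, the empirical density of pixels inside each ground-truth cluster is sharply concentrated around its true profile, and then to argue by induction on the iterations of Algorithm~\ref{alg:find} that each iteration extracts exactly one codeword from a different cluster.

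First, I would partition the $N$ training pixels by which ground-truth codeword $c_i^*$ generated them. By the marginal assumption on the skeleton, each cluster receives roughly $N/k$ pixels, up to a Chernoff-type deviation. Conditioned on these counts, inside cluster $i$ the pixels are i.i.d.\ copies of $c_i^* + \zeta$, and the noise density $\alpha \exp(-\|\zeta\|^2/\sigma^2)$ is strictly maximized at $\zeta = 0$. A Hoeffding bound together with a union bound over the (discrete, finite) pixel range in $[K]^3$ then gives, with probability at least $1-\delta$, a uniform approximation of each cluster's empirical density by the true density; this is precisely the content of Proposition~\ref{prop:ideal} and it yields the quantitative parameter $\nu$, in particular forcing the empirical mode of each cluster to lie within $\nu$ of $c_i^*$.

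Second, I would work on this event and run the algorithm inductively. The key arithmetic facts from $\Gamma > 16\nu$ and $r = 2\nu$ are: (i) the empirical mode $\hat m_i$ of each cluster satisfies $\|\hat m_i - c_i^*\| \le \nu$, so it is a valid $\nu$-approximate codeword; (ii) distinct empirical modes are separated by at least $\Gamma - 2\nu > 14\nu$, so removing the ball of radius $2\nu$ around one never touches another; and (iii) after deleting the $2\nu$-ball around $\hat m_i$, every remaining pixel of cluster $i$ has $\|\zeta\| \ge \nu$ and hence true density at most $\exp(-\nu^2/\sigma^2)$ times the mode density. Facts (i)--(iii), combined with the uniform concentration from the previous paragraph, imply that the global argmax over remaining pixels is always attained at the empirical mode of a cluster that has not yet been processed.

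Applying this inductively for $k$ rounds, Algorithm~\ref{alg:find} produces exactly one codeword within $\nu$ of each $c_i^*$; matching the output to the ground-truth codewords by proximity and relabeling gives $\|c_i - c_i^*\| \le \nu$ for all $i$, as required. The main obstacle is step (iii): verifying that after the mode of one cluster has been removed, the residual density inside that cluster (now down by a factor of $e^{-\nu^2/\sigma^2}$) cannot beat the still-intact modes of other clusters even after accounting for the Hoeffding slack. Making this rigorous requires picking the sample size $N$ hidden inside $\nu$ so that the uniform concentration error is smaller than the multiplicative gap $1 - e^{-\nu^2/\sigma^2}$; here $\Gamma > 16\nu$ jointly with the noise radius bound $\|\zeta_i\| \le \Gamma/8$ is what allows the geometry and the exponential decay to be compatible with the choice $r=2\nu$.
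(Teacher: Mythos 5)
Your proposal follows essentially the same route as the paper's proof: Hoeffding concentration of the empirical pixel-value frequencies, the observation that each ground-truth codeword retains empirical frequency at least $\alpha/k-\gamma$ while any value at distance at least $\nu$ from all codewords has frequency at most $\alpha\exp(-\nu^2/\sigma^2)+\gamma$, and the geometric fact that with $r=2\nu$ and $\Gamma>16\nu$ each greedy step removes the $\nu$-neighborhood of exactly one ground-truth codeword without touching the others. The ``main obstacle'' you flag in step (iii) is already resolved by the definition of $\nu$ in Eqn.~(\ref{eqn:ideal}), which is calibrated precisely so that $\alpha\exp(-\nu^2/\sigma^2)+\gamma \le \alpha/k-\gamma$, i.e., so that the Hoeffding slack $\gamma$ is absorbed into the exponential gap rather than being a separate condition on $N$ you must impose.
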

\begin{proof}
  Let $x$ denote a pixel in the generated images. We have for any $i$,
  $\text{Pr}[x=c^*_i] \ge \alpha/k$, and for any $c$ such that
  $\|c-c^*_i\| \ge \nu$ for all $i$, $\Pr[x=c] \le \alpha
  \exp(-\nu^2/\sigma^2)$.  By Hoeffding's inequality~\cite[Section
    2.6]{book:concentration}, with probability at least $1-\delta$,
  the empirical estimations satisfy
  \begin{align*}
    \widehat{\Pr}[x=c^*_i] & \ge \alpha/k - \gamma,
		\\
		\widehat{\Pr}[x=c] & \le \alpha \exp(-\nu^2/\sigma^2) + \gamma.
  \end{align*}
  Then $\widehat{\Pr}[x=c^*_i] > \widehat{\Pr}[x=c]$. Note that $r=2\nu$ and $\Gamma>16\nu$, so for each $c^*_i$,
  Algorithm~\ref{alg:find} picks a code from the neighborhood around $c^*_i$ of radius $\nu$ exactly once. This completes the proof.
\end{proof}

Our main result, Proposition~\ref{prop:ideal}, then follows from
Lemma~\ref{lem:ideal}.  To state the result, let us call a
transformation $G$ to be a \emph{$\nu$-code perturbation} of
$\mathcal{C}^*$ if given any skeleton $\mathbf{u}$, $G(\mathbf{u})$
replaces any $c^*_i$ in it with a code $c'_i$ satisfying $\|c'_i -
c^*_i\| \le \nu$.  With this definition we show that, on an image
attacked with adversarial budget $\epsilon\le 5\nu$, our
discretization $T$ will output a $\nu$-code perturbation on
$\mathbf{u}$.  Lemma~\ref{lem:ideal} then leads to the following
proposition (proof is straightforward).

\begin{prop} \label{prop:ideal}
  Assume the idealized generative model, and $\Gamma > 16\nu$ where 
  \begin{align}
    \label{eqn:ideal}
    \nu = \sigma \sqrt{\log \frac{1}{1/k - 2\gamma/\alpha}}, ~~ \gamma = \sqrt{\frac{4}{N} \log\frac{K}{\delta}},
  \end{align}
where $N$ is the number of pixels in the training dataset. Assume $r=2\nu$ in Algorithm~\ref{alg:find}.
Then for any $\epsilon\le 5\nu$,
\begin{align*}
  &  \textnormal{Pr}_{(\bfx,\mathbf{u},y)}[ F(T(\mathbf{z})) = y \textrm{~for any~} \|\mathbf{z} - \bfx\|
  \le \epsilon ] 
	\\
	\ge \ & \min_{G}\textnormal{Pr}_{(\bfx,\mathbf{u},y)}[ F(G(\mathbf{u})) = y]
\end{align*}
with probability at least $1-\delta$, where the minimum is taken over all $\nu$-code perturbation $G$.
\end{prop}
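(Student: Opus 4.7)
\textbf{Proof proposal for Proposition~\ref{prop:ideal}.} The plan is to combine Lemma~\ref{lem:ideal} with a pixel-wise triangle-inequality argument showing that, once the constructed codebook $\mathcal{C}$ is $\nu$-close to $\mathcal{C}^*$, the preprocessing $T$ applied to any adversarial image $\mathbf{z}$ in the $\epsilon$-ball of $\bfx$ must realize some $\nu$-code perturbation of the underlying skeleton $\mathbf{u}$. The result then follows by lower-bounding the probability of correct classification by the worst case over such perturbations.

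First, I invoke Lemma~\ref{lem:ideal}: with probability at least $1-\delta$ over the training pixels, Algorithm~\ref{alg:find} with $r=2\nu$ returns codes $\mathcal{C}=\{c_1,\ldots,c_k\}$ with $\|c_i - c^*_i\|\le \nu$ for all $i$. From here the proof is deterministic. Fix any test image $\bfx$ with skeleton $\mathbf{u}$, noise $\zeta$ (so $\bfx[i]=\mathbf{u}[i]+\zeta_i$ with $\|\zeta_i\|\le \Gamma/8$), and any adversarial perturbation $\mathbf{z}$ with $\|\mathbf{z}-\bfx\|\le \epsilon\le 5\nu$. For each pixel $i$, write $\mathbf{u}[i]=c^*_{j}$ for the appropriate $j$.

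The key step is to show $T(\mathbf{z})[i] = c_j$. On the one hand,
\begin{align*}
\|\mathbf{z}[i]-c_j\|
&\le \|\mathbf{z}[i]-\bfx[i]\|+\|\zeta_i\|+\|c^*_j-c_j\|\\
&\le 5\nu+\Gamma/8+\nu = 6\nu+\Gamma/8.
\end{align*}
On the other hand, for any $j'\neq j$, using $\|c^*_{j'}-c^*_j\|\ge \Gamma$,
\begin{align*}
\|\mathbf{z}[i]-c_{j'}\|
&\ge \|c^*_{j'}-c^*_j\|-\|c^*_j-\mathbf{z}[i]\|-\|c^*_{j'}-c_{j'}\|\\
&\ge \Gamma - (\Gamma/8+5\nu) - \nu = 7\Gamma/8 - 6\nu.
\end{align*}
The hypothesis $\Gamma>16\nu$ gives $7\Gamma/8-6\nu > \Gamma/8+6\nu$, so $c_j$ is the unique nearest codeword to $\mathbf{z}[i]$, and $T(\mathbf{z})[i]=c_j$. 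This holds for every pixel, so $T(\mathbf{z})=G(\mathbf{u})$ where $G$ is the fixed $\nu$-code perturbation that replaces each $c^*_j$ in $\mathbf{u}$ by $c_j$. Consequently, for every such $\mathbf{z}$, $F(T(\mathbf{z}))=F(G(\mathbf{u}))$, so the event $\{F(T(\mathbf{z}))=y \text{ for all } \|\mathbf{z}-\bfx\|\le\epsilon\}$ contains $\{F(G(\mathbf{u}))=y\}$, which in turn contains $\bigcap_{G'}\{F(G'(\mathbf{u}))=y\}$. Taking probabilities over $(\bfx,\mathbf{u},y)$ and dropping to the worst $G'$ yields the stated bound, which holds on the $1-\delta$ event provided by Lemma~\ref{lem:ideal}.

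The only non-mechanical step is the triangle-inequality chain above; this is where the constant $5$ in $\epsilon\le 5\nu$ and the separation requirement $\Gamma>16\nu$ must match up exactly. Everything else is bookkeeping: the pixel noise budget $\Gamma/8$, the codebook approximation error $\nu$, and the adversarial budget $5\nu$ must together leave strictly less than half the separation $\Gamma$, which is precisely what $\Gamma>16\nu$ ensures.
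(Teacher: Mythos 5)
Your proof is correct and fills in exactly the argument the paper intends: the paper invokes Lemma~\ref{lem:ideal}, asserts that $T$ applied to any $\epsilon$-perturbed image realizes a fixed $\nu$-code perturbation of the skeleton $\mathbf{u}$, and declares the rest straightforward, which is precisely your triangle-inequality chain (whose constants $6\nu+\Gamma/8 < 7\Gamma/8-6\nu \iff \Gamma>16\nu$ check out). No gaps; this matches the paper's approach, just with the "straightforward" step written out.
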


Essentially, this proposition says that one can ``reduce'' defending
against $5\nu$ adversarial attacks to defending against $\nu$-code
perturbations $G$. Therefore, as long as one has a base model $F$ that
is robust to small structured perturbation (i.e., the $\nu$-code
perturbation), then one can defend against \emph{any} $5\nu$
adversarial attacks, which is a significant boost of robustness.
Indeed, we observed that the intuition is consistent with the
experimental results: the method gives better performance using an
adversarially trained $F(\cdot)$, than a naturally trained $F$, on
structured data like MNIST.

This analysis also inspires the following simple approach for
computing a certificate of the robust accuracy for a set of codewords
on given images and a given adversarial budget.

\subsubsection{Certification for discretization defense}
The analysis shows that the method succeeds when the adversarial
attack cannot cause significant changes after discretization, and the
base model is robust to the slight change in the discretized image.
In fact, the intuition is also applicable to general cases beyond the
idealized model. In particular, one can empirically check if such
conditions are met when the images, the base model, and the
adversarial budget are all given.  This observation then gives our
algorithm for computing the certificate for the defense method.

Now we formally derive the certificates for the defense. Given the
codewords $\mathcal{C} = \{c_1, c_2, \ldots, c_k\}$ used in the pixel
discretization, for a pixel $\bfx[i]$, let $c^*(\bfx[i])$ denote its
nearest code in $\mathcal{C}$, and define
\begin{align} \label{eqn:disc_deter_can}
\mathcal{C}(\bfx[i]) = \{c:  \|\bfx[i] - c\| < \|c^*(\bfx[i]) - \bfx[i]\| + 2\epsilon \}
\end{align}
where $\epsilon$ is the adversarial budget.
 
Then after perturbation $\delta$ bounded by $\epsilon$, the distance
between the perturbed pixel $\bfx[i] + \delta$ and $c^*(\bfx[i])$ is
\begin{align*}
\| \bfx[i] + \delta - c^*(\bfx[i]) \| 
& \le \| \bfx[i] - c^*(\bfx[i]) \| + \| \delta \|\\
& \le \| \bfx[i] - c^*(\bfx[i]) \| + \epsilon
\end{align*}
by the triangle inequality.  On the other hand, the distance between
the perturbed pixel $\bfx[i] + \delta$ and any $c\not\in \mathcal{C}(\bfx[i])$
is
\begin{align*}
\| \bfx[i] + \delta - c \| & \ge \| \bfx[i] - c\| - \| \delta \| 
\\
& \ge \| \bfx[i] - c \| - \epsilon.
\end{align*}
By the definition of $\mathcal{C}(\bfx[i])$, we know that 
$$
 \| \bfx[i] - c \| - \epsilon > \| \bfx[i] - c^*(\bfx[i]) \| + \epsilon.
$$ So after perturbation, the pixel $\bfx[i]+\delta$ can only be
 discretized to a code in $\mathcal{C}(\bfx[i])$.

Then all possible outcomes of the discretization after perturbation are
\begin{align} \label{eqn:disc_deter_cer1}
  \mathcal{S}(\bfx) = \{\mathbf{z} = (\mathbf{z}[1], \ldots, \mathbf{z}[d]): \mathbf{z}[i] \in \mathcal{C}(\bfx[i]) \}.
\end{align}
This then leads to the following local certificate for a given image,
and global certificate for the whole distribution.

\noindent\textbf{Local certificate.}  For a data point $(\bfx, y)$, if
for any $\mathbf{z} \in \mathcal{S}(\bfx)$ we have $F(\mathbf{z}) =
y$, then it is guaranteed that $\bfx$ is correctly classified by
$F(T(\cdot))$ to $y$ even under the adversarial attack with
$\epsilon$-budget.  Formally, let $\mathbb{I}_F(\bfx,y)$ be the
indicator that $F(\mathbf{z}) = y$ for all $\mathbf{z} \in
\mathcal{S}(\bfx)$, then
\begin{align} \label{eqn:disc_deter_cer2}
& \mathbb{I}_F(\bfx,y) = \mathbb{I}[F(\mathbf{z}) = y, \forall \mathbf{z} \in \mathcal{S}(\bfx)]
\nonumber \\
\le \ & \mathbb{I}[F(T(\mathbf{z})) = y \textrm{~for any~} \|\mathbf{z} - \bfx\| \le \epsilon],
\end{align}
so $\mathbb{I}_F(\bfx,y)$ is a lower bound on the robust accuracy for
this data point. It serves as a local certificate for $(\bfx, y)$.

\noindent\textbf{Global certificate.}
Define 
$
s = \mathbb{E}_{(\bfx,y) \sim D}[\mathbb{I}_F(\bfx,y)].
$
Then clearly, we have
$$
  s \le \Pr_{(\bfx,y)\sim D}[ F(T(\mathbf{z})) = y \textrm{~for any~} \|\mathbf{z} - \bfx\| \le \epsilon],
$$
so $s$ serves as a lower bound for the robustness accuracy of the defense on the whole data distribution. 

Of course, computing the exact value of $s$ is not feasible, as it
requires access to the true data distribution. Fortunately, this
certificate $s$ can be easily estimated on a validation set of data
points. Even with a medium number of samples we can compute an
estimation $\hat{s}_*$ that is with high probability a close lower
bound of $s$.  Formally, applying a standard concentration bound (in
particular, the Hoeffding's inequality) leads to the following.
\begin{prop} \label{prop:disc_deter_cer}
Let $\hat{s}$ be the fraction of $\mathcal{I}_F(\bfx,y)=1$ on a set
$\mathcal{V}$ of $m$ i.i.d.\ samples from the data distribution.  Then
with probability at least $1-\delta$ over $\mathcal{V}$,
\begin{align*}
  & \Pr_{(\bfx,y)\sim D}[ F(T(\mathbf{z})) = y \textrm{~for any~} \|\mathbf{z} - \bfx\| \le \epsilon]
	\\
	\ge \ & \hat{s}_* :=\left( 1 - \sqrt{\frac{1}{2m} \log \frac{1}{\delta}} \right) \hat{s}.
\end{align*}
\end{prop}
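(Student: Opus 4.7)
The plan is to combine two ingredients already in hand. The first is the pointwise certificate argument developed immediately before the proposition, and the second is a standard concentration inequality for bounded i.i.d.\ random variables.

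The first step is to recall that for every $(\bfx,y)$ we have shown
\[
\mathbb{I}_F(\bfx,y) \; \le \; \mathbb{I}[F(T(\mathbf{z})) = y \text{ for all } \|\mathbf{z}-\bfx\|\le \epsilon].
\]
Taking expectations over $(\bfx,y) \sim D$ on both sides immediately gives
\[
s \; := \; \mathbb{E}_{(\bfx,y)\sim D}[\mathbb{I}_F(\bfx,y)] \; \le \; \Pr_{(\bfx,y)\sim D}[F(T(\mathbf{z})) = y \text{ for all } \|\mathbf{z}-\bfx\|\le \epsilon].
\]
So the task reduces to showing that $\hat{s}_* \le s$ holds with probability at least $1-\delta$ over the draw of the validation set $\mathcal{V}$.

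The second step is concentration. Because $\mathcal{V}$ consists of $m$ i.i.d.\ samples and each $\mathbb{I}_F(\bfx_i,y_i)$ is a $\{0,1\}$-valued random variable with mean $s$, the empirical mean $\hat{s}$ is an average of $m$ i.i.d.\ bounded random variables with expectation $s$. I would apply the one-sided Hoeffding inequality to obtain
\[
\Pr\bigl[\,\hat{s} - s \ge t\,\bigr] \; \le \; \exp(-2mt^2).
\]
Setting the right-hand side equal to $\delta$ yields $t = \sqrt{\frac{1}{2m}\log\frac{1}{\delta}}$, so with probability at least $1-\delta$ we have $s \ge \hat{s} - t$. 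Chaining this with the first step completes the argument up to an additive deviation bound, and the stated multiplicative form $\hat{s}_* = (1-t)\hat{s}$ follows by noting that $\hat{s}\le 1$, so the factor only weakens the bound and keeps it nonnegative.

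I do not anticipate a real obstacle: both ingredients are already assembled in the paper. The only cosmetic subtlety is reconciling the additive Hoeffding deviation $\hat{s}-t$ with the multiplicative form $(1-t)\hat{s}$ appearing in the statement, but this is handled by the bound $\hat{s}\in[0,1]$ and does not require any new idea.
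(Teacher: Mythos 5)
Your first two steps coincide with the paper's own route: the paper justifies the proposition with a single sentence (``applying a standard concentration bound (in particular, the Hoeffding's inequality) leads to the following''), and your combination of the pointwise certificate inequality, the expectation step $s \le \Pr[\cdots]$, and one-sided Hoeffding is exactly that. The genuine problem is your final step. You assert that the additive bound $s \ge \hat{s} - t$ with $t = \sqrt{\frac{1}{2m}\log\frac{1}{\delta}}$ implies the multiplicative bound $s \ge (1-t)\hat{s}$ because ``$\hat{s}\le 1$, so the factor only weakens the bound.'' The inequality runs the other way: since $t\hat{s} \le t$ for $\hat{s}\in[0,1]$, we have $(1-t)\hat{s} = \hat{s} - t\hat{s} \ge \hat{s} - t$, with equality only at $\hat{s}=1$. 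Hence $(1-t)\hat{s}$ is the \emph{larger} of the two lower bounds, i.e., the stronger claim, and it is not implied by what you proved. What your argument actually establishes is the additive certificate $\Pr_{(\bfx,y)\sim D}[\cdots] \ge \hat{s} - \sqrt{\frac{1}{2m}\log\frac{1}{\delta}}$.

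To obtain a genuinely multiplicative bound of the form $(1-t)\hat{s}$ one would need a relative-error concentration inequality (a multiplicative Chernoff or empirical Bernstein bound), and the deviation term would then not be $\sqrt{\frac{1}{2m}\log\frac{1}{\delta}}$. In fact, for small $s$ (e.g., $s=\Theta(1/\sqrt{m})$) the event $\hat{s} > s/(1-t)$ occurs with probability close to $1/2$, so the multiplicative form cannot follow from plain Hoeffding. This defect is inherited from the paper itself, which states the multiplicative form while citing only Hoeffding; the honest output of the Hoeffding route---yours and the paper's---is the additive bound. In the regime where the certificate is actually used ($\hat{s}$ near $1$), the two forms are numerically close, which is presumably why the slip is easy to miss, but as a proof of the proposition as stated your last step does not go through.
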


Note that computing the certificate needs enumerating
$\mathcal{S}(\bfx)$ that can be of exponential size.  However, when
the pixels are well clustered, most of them are much closer to their
nearest code than to the others, and thus will not be discretized to a
new code after perturbation, i.e., $|\mathcal{C}(\bfx[i])|=1$. Then
$\mathcal{S}(\bfx)$ is of small size and the certificate is easy to
compute.  This is indeed the case on the MNIST data, which allows us
to compute the estimated certificate $\hat{s}$ in our experiments.

\begin{table}
    \centering
    \begin{tabular}{c|c|c|c|c|c}
      \hline
      $\epsilon$ & $b$ & Unable & Success($\hat{s}$) & Fail & $\hat{s}_*$ \\ \hline
      0.00 & 0 & 0.0\% & 98.1\% & 1.9\%  &  96.61\% \\ \hline
      0.05 & 30 & 0.43\% & 97.43\% & 2.14\% & 95.95\% \\ \hline
      0.10 & 30 & 1.3\% & 96.43\% & 2.27\% & 94.97\% \\ \hline 
      0.15 & 26 & 29.18\% & 69.01\% & 1.81\% & 67.96\% \\ \hline
      0.20 & 25 & 74.38\% & 24.96\% & 0.66\% & 24.58\% \\ \hline 
      0.25 & 25 & 89.48\% & 10.37\% & 0.15\% & 10.21\% \\ \hline 
      0.30 & 25 & 95.67\% & 4.31\% & 0.02\% & 4.24\% \\ \hline   
    \end{tabular}
    \vspace*{4mm}
    \caption{Certificate results on MNIST for different adversarial
      budget $\epsilon$. For computational reasons, we set a threshold
      $b$, and if $|\mathcal{S}(\bfx)|>2^{b}$ where
      $\mathcal{S}(\bfx)$ is defined in Eqn
      (\ref{eqn:disc_deter_cer1}), we report it as an Unable
      case. $\hat{s}$ and $\hat{s}_*$ are defined in
      Proposition~\ref{prop:disc_deter_cer}.}
    \label{table:certificate}
\end{table}

\begin{figure}
    \centering
    \includegraphics[width=0.9\linewidth]{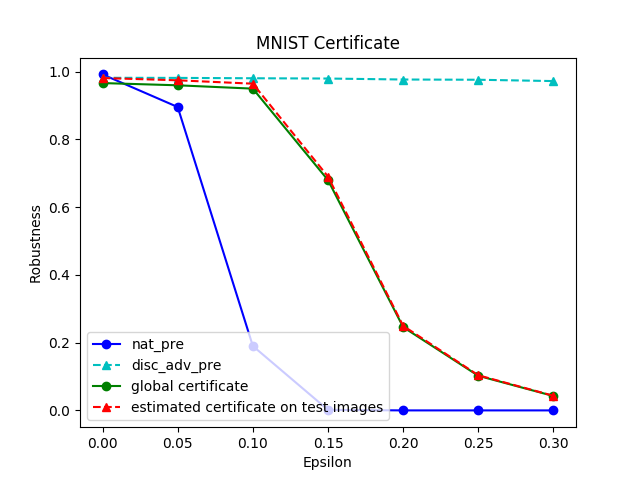}
    \caption{Certificate results on MNIST for the codewords constructed by Algorithm~\ref{alg:find}, compared to the empirical result without defense (nat\_pre), and the empirical result with defense (disc\_adv\_pre). We plot two certificate results: estimated certificate on test images $\hat{s}$, and global certificate $\hat{s}_*$, as defined in Proposition~\ref{prop:disc_deter_cer}. Also see Table~\ref{table:certificate} for the detailed numbers.}
    \label{fig:certificate}
\end{figure}

\subsubsection{Certifying robustness on datasets}

Here we compute the certificate derived above on real world data sets,
including MNIST and others. It is expected that the certificate can be
computed in reasonable time on well structured data like MNIST, while
it may not be efficiently computed or provide no useful lower bound
for other less structured data sets.

\noindent\textbf{MNIST Certificate.}  We compute the estimated
certificate ($\hat{s}$ in in Proposition~\ref{prop:disc_deter_cer}) on
10000 test images of MNIST for the codewords constructed by
Algorithm~\ref{alg:find}. We also compute the global certificate
($\hat{s}_*$ in Proposition~\ref{prop:disc_deter_cer}) for the same
set of codewords, where the failure probability $\delta$ is set to be
0.01.  For computational reasons we set up a threshold $b$ and for
$\bfx$ with $|\mathcal{S}(\bfx)|>2^{b}$ where $\mathcal{S}(\bfx)$ is
defined in (\ref{eqn:disc_deter_cer1}), we report {\sf Unable} and
treat them as failure cases when computing the certificates.

Our results are provided in Table~\ref{table:certificate}, and exact
values of $\hat{s}$ and $\hat{s}_*$ are compared with experimental
results in Figure~\ref{fig:certificate}.  There exists some methods to
compute estimated certificate robustness, like~\cite{katz2017reluplex,
  kolter2017provable, raghunathan2018certified}.  The state-of-art
estimated certificate robustness on MNIST under $\ell_\infty$
perturbations with $\epsilon=0.1$ is
$94.18\%$~\cite{kolter2017provable}, with a fairly sophisticated
method. Our discretization defense, which is much simpler and more
efficient, gives a better estimated certificate robustness of
$96.43\%$.  This demonstrate the effectiveness of this simple
certificate method. It also verifies the analysis for our idealized
generative model of images, providing positive support for the
conclusion that in the presence of well separated clusters of pixels,
the pixel defense method on a good base model can successfully defend
the adversarial attacks.

\subsection{How hard is a dataset to defend?}
\label{sec:hardness}
The discretization defenses against adversarial attacks are effective
only for some datasets (e.g, MNIST). In the last section we analyzed
why on datasets like MNIST our method can succeed. But it is also
interesting and can help improve the method if we can understand why
our techniques fail on other datasets and quantify how far they
are to the conditions for the success.  In this section, we propose
metrics for these purposes and provide empirical results using these
metrics.

Before formally defining these metrics, let us first describe the
concrete context for using these metrics to show the level of hardness
for defense. These include the intuition about the base model, and our
goal for using the metrics.

First, let $F$ denote our base model that is \emph{naturally
  trained}. The fact that they fail miserably under adversarial
attacks with small adversarial budgets suggests that such models have
``peculiar behaviors''. Concretely, for any input feature vector
$\bfx$, we define its robust radius with respect to $F$ and $\ell_p$
norm, $\delta_{F, p}(\bfx)$, as
  \begin{align*}
    \delta_{F, p}(\bfx) = \sup_{\delta}\{ \forall \bfx', \|\bfx' - \bfx\|_p \le \delta, F(\bfx') = F(\bfx) \}.
  \end{align*}
  In other words, $\delta_{F, p}(\bfx)$ is the largest radius (in
  $\ell_p$-norm), under which $F$ still gives consistent predictions
  around $\bfx$. The fact that the base model is vulnerable to small
  adversarial perturbations can then be formalized as the following
  statement:
 \vskip 5pt
 \textbf{Vulnerability of natural models}: \emph{For most points $\bfx$ in the domain,
    $\delta_{F, p}(\bfx)$ is small (``infinitesimal'').}
	\vskip 5pt

Now let $T$ denote the discretization preprocessing algorithm, which
takes as input $\bfx$, and outputs another feature vector $\mathbf{z}
= T(\bfx)$ (of the same dimensionality), and the final classification
is $F(\mathbf{z}) = F(T(\bfx))$. Our goal is the following:
  \vskip 5pt
\textbf{Goal}: \emph{Arguing that it will be very hard to come up with a $T$ that works for
    the naturally trained $F$, except for trivial situations.}
\vskip 5pt

The intuition is that the base model $F$ is peculiar while the discretization is ``regular'' (i.e., the discretized image is a product of the same discretization scheme on each of its pixel). The restriction on the discretization, when facing the peculiarity of the base model, prevents simultaneously getting a good accuracy and getting consistent output in the neighborhood of the input data. 

To get the intuition, consider the following intentionally simplified setting: each image has only one pixel taking values in $[0, 1]$. Divide $[0,1]$ into $m$ many intervals $[(i-1)/m, i/m]$ for $1\le i \le m$, and the images in the $i$-th interval belong to class $0$ if $i$ is even, and belong to class $1$ if $i$ is odd. Now discretize with $k$ codewords. When $m$ is much larger than $k$, no matter where we place the codewords, the accuracy will not be good: if $I_c$ denotes the set of images discretized to the codeword $c$, then we know that many intervals with different classes will fall into the same $I_c$'s, resulting in bad accuracy for any classifiers on top of the discretized images. On the other hand, if $k$ is comparable to $m$, the distance between the codewords will be small and the attacker can easily change the discretized outcome by perturbing the image from one $I_c$ to another $I_{c'}$. Therefore, the fracture decision boundary of the data and the regularity of the discretization combined together can prevent obtaining at the same time good accuracy and unchanged predictions w.r.t.\ adversarial perturbations. 

Now let us provide a more detail argument below.  

\subsubsection{An argument based on equivalence classes}

We think of $T$ as creating equivalence classes, where $\bfx \sim_T
\bfx'$ if $T(\bfx) = T(\bfx')$, that is, they are discretized to the
same output. Now, given an $\ell_p$-norm adversary with adversarial
budget $\varepsilon$, for each input $\bfx$, we can consider the set
of equivalence classes obtained in the ball $N_p(\bfx,
\varepsilon)$. In mathematical terms this is exactly:
\begin{align*}
  \overline{N_p}(\bfx, \varepsilon) = N_p(\bfx, \varepsilon) / \sim_T
\end{align*}
where $\sim_T$ denotes the equivalence relation induced by $T$.

Now, we want to argue the following:
\begin{enumerate}
\item For a naturally trained model $F$, $N_p(\bfx, \varepsilon)$ is
  tiled with infinitesimal balls of different classification labels given by the base model.

\item In order to keep accuracy, we tend to create small equivalence
  classes with $\sim_T$.  In other words, $|\overline{N_p}(\bfx,
  \varepsilon)|$ is large (i.e. there are lot of equivalence classes
  in a neighborhood around $\bfx$). In fact, not only we can assume
  that it is large, but we can assume that \emph{$\overline{N_p}(\bfx,
    \varepsilon)$ subtracting the equivalence class of $\bfx$ has a
    large volume (under uniform measure say).}

\item If the above two hold, then it is now very likely that there is
  a small ball, with a different label, that lies in a different
  equivalence class than $\bfx$. In other words, if we denote
  $\overline{\bfx}$ as the equivalence class of $\bfx$ in $N_p(\bfx,
  \varepsilon)/\sim_T$, then that there is $\bfx'$ such that
  $\overline{\bfx} \neq \overline{\bfx'}$, and $F(\bfx) \neq
  F(\bfx')$, and we can easily find an adversarial example.
\end{enumerate}

\subsubsection{An instantiation for $\ell_\infty$ norm}

We note that for $\ell_\infty$ norm, it is very easy to construct
$\overline{N_\infty}(\bfx, \varepsilon)$.  Specifically, suppose that
the input feature vectors are $d$-dimensional. Then for each ``pixel''
$\bfx[i]$, we can consider the discretization of $\bfx[i]$ under $T$,
i.e., $T(\bfx[i])$ (note that we are abusing the notation and now
consider the effect of $T$ on a single pixel). So for every $i$, we
can consider the equivalence classes created at dimension $i$ for the
interval:
  $C_i = \{ T(z)\ |\ \|z-\bfx[i]\|\leq \varepsilon \}$.
Then all the equivalence classes we can create for an image $\bfx$ are simply:
  $\prod_{i=1}^d C_i$.
Therefore, the $|\overline{N_\infty}(\bfx, \varepsilon)|$ becomes
$\prod_{i=1}^d|C_i|$.  We can thus either use this number as a measure
of how fragmented the ball is under $T$, reflecting how difficult it is to do the defense.

We would like to measure the quantity $\prod_{i=1}^d|C_i|$ for the
datasets we use. In Figures~\ref{fig:mnist-cdf} and
\ref{fig:cifar-cdf}, we plot the CDF (cumulative distribution
function) of $\frac{1}{d}\log_k \prod_{i=1}^d |C_i|$ for both MNIST and CIFAR-10,
where $k$ is the number of codes ($2$ for MNIST and $300$ for CIFAR-10
and $d$ is the number of pixels in one image ($28^2$ for MNIST and
$32^2$ for CIFAR-10).

\begin{figure*}
  \centering
  \subfloat{\includegraphics[width=0.5\linewidth]{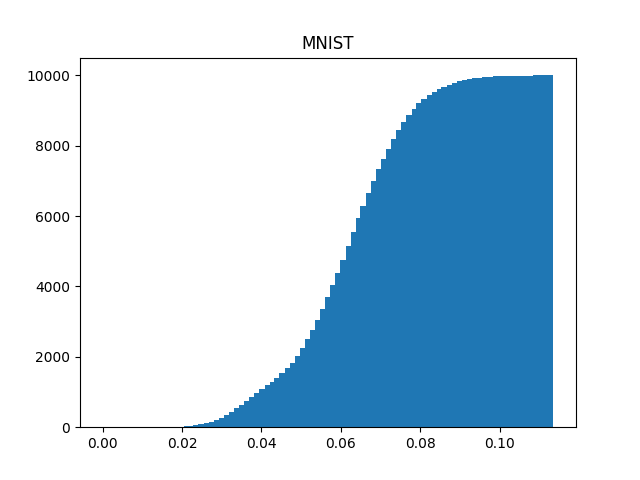}
  \label{fig:mnist-cdf}}
  \subfloat{\includegraphics[width=0.5\linewidth]{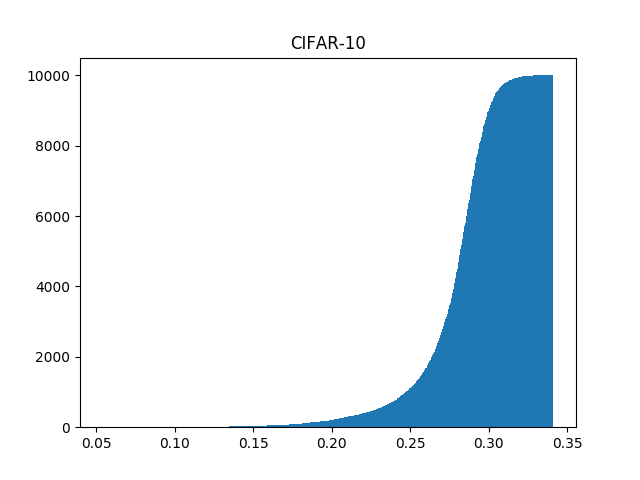}
  \label{fig:cifar-cdf}}
  \caption{
    CDFs measuring the hardness of defending images in MNIST and CIFAR-10.}
  \label{fig:mnist_cifar_cdf}
\end{figure*}

For MNIST, the measure of the median figure is about 0.06, which means
the median $\prod_{i=1}^d|C_i|$ is approximately 47. Out of 784 pixels
of an image, in the median case only 47 can be perturbed to a
different equivalence class.  This shows why it is easy to achieve
robustness on MNIST. For CIFAR-10, the measure for median figure is
about 0.27, which implies that the number of equivalence classes
$\prod_{i=1}^d|C_i|$ is $300^{276}$, which is a huge number and supports
the hypothesis that defending CIFAR-10 will be hard using pixel
discretization techniques.

Furthermore, this equivalence class argument also gives an explanation to the adversarial gap phenomenon (see Figure~\ref{fig:cifar_gap_results}): adversarial training only helps to adjust the labels of the equivalence classes of the perturbations of the training data points, which could be quite different from those classes of the perturbations of the test data points due to the peculiar decision boundary. This then leads to high robust accuracy in training time but low in test time. 

\subsection{How Separable is a Dataset?}
One key requirement for pixel discretization is separability. In a separable
dataset, pixel clusters are far from each other such that perturbation of a pixel
in one cluster cannot make the resultant pixel to move to another cluster. This
is directly related to the equivalence classes argument discussed above.

We visualize MNIST and CIFAR-10 datasets to study if they can have separable
clusters. Figure~\ref{fig:cifar-histogram} presents visualizations of CIFAR-10
pixel neighborhoods. Each axis in the plots corresponds to a color channel.
There are about 4 million distinct pixels in the CIFAR-10 training dataset.
These being too many pixels, we plot only samples of these pixels in this
figure. Figure~\ref{fig:40k-linscale} shows 40,000 pixels on the plot, with each
axis representing a color channel. We assume a colorscale range of $[0,1]$. The
color of each pixel $x$ is given by $|\mathcal{N}_\infty(x, \epsilon)|/\max_x |\mathcal{N}_\infty(x,
\epsilon)|$, where $\mathcal{N}_\infty(x, \epsilon)$ is the $\ell_\infty$ neighborhoods of
radius $\epsilon$ around $x$. For these figures we use $\epsilon=8$. Note that
the maximum neighborhood size is over 1.5 million and there are many pixels with
neighborhood sizes in hundreds or a few thousands and hence most of pixels have
colors close to zero.

To overcome the long-tail distribution of neighborhood sizes, we also do a log
scale plot in Figure~\ref{fig:400k-logscale}, this time using a sample of
400,000 pixels and using  $\log | \mathcal{N}_\infty(x, \epsilon)|/\log \max_x |\mathcal{N}_\infty(x,
\epsilon)|$ to color the pixels on a colorscale of $[0,1]$. As can be seen from
both these plots, \emph{there does not appear to be a clear, separable clustering in
the CIFAR-10 dataset.} Note that the line $x=y=z$, which corresponds to gray
colors on RGB, is lighter-colored, implying large neighborhoods in this area. In
fact, we have verified that our density estimation-based algorithm selects
codewords along this line. Clearly, these codewords are neither very
representative of pixels far from this line nor do they lead to the separability
property for the clusters around them.

\begin{figure*}
  \vskip -3pt
  \centering
  \subfloat[Linear scale]{\includegraphics[width=0.5\linewidth]{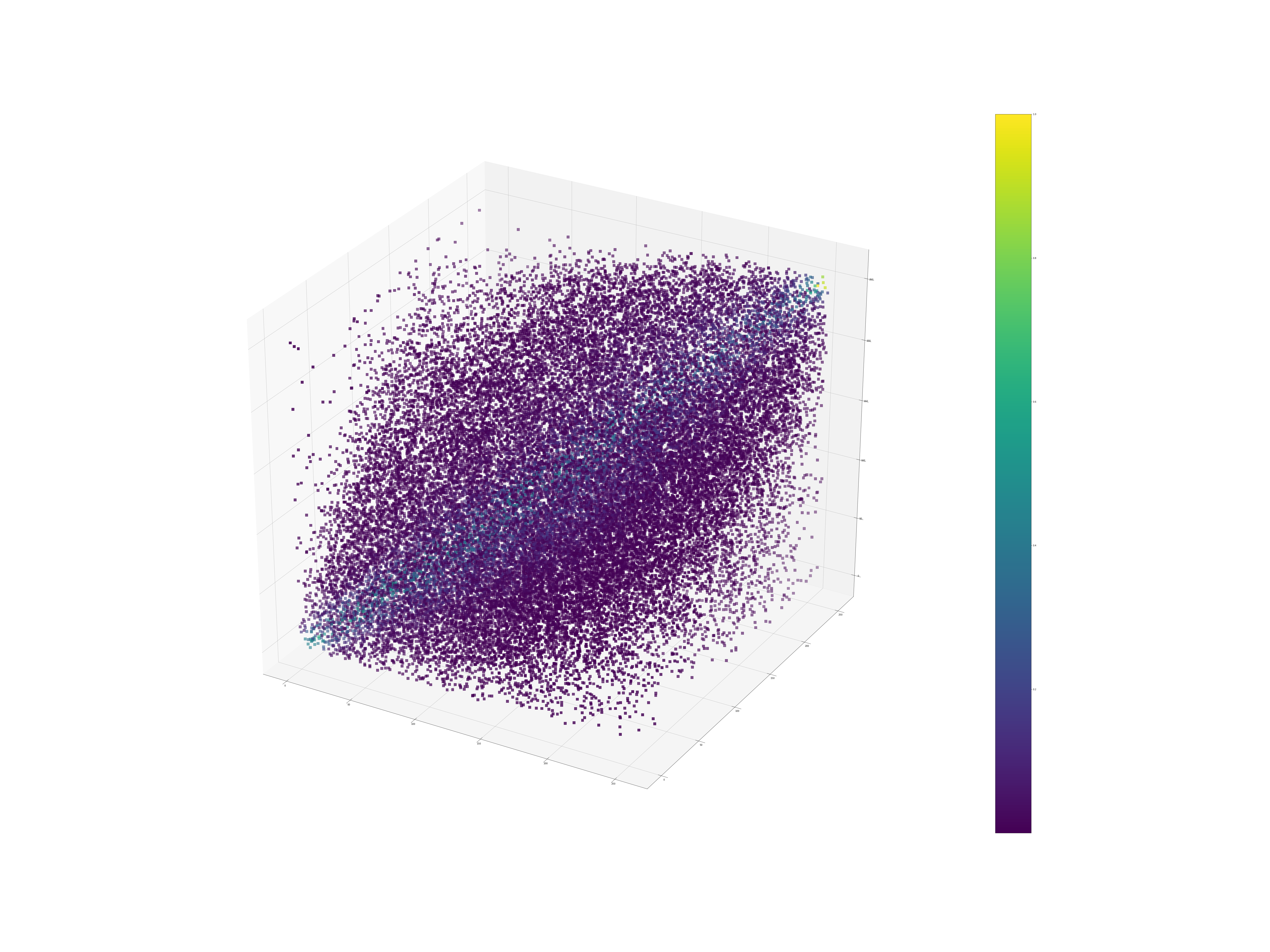}
  \label{fig:40k-linscale}}
  \subfloat[Log scale]{\includegraphics[width=0.5\linewidth]{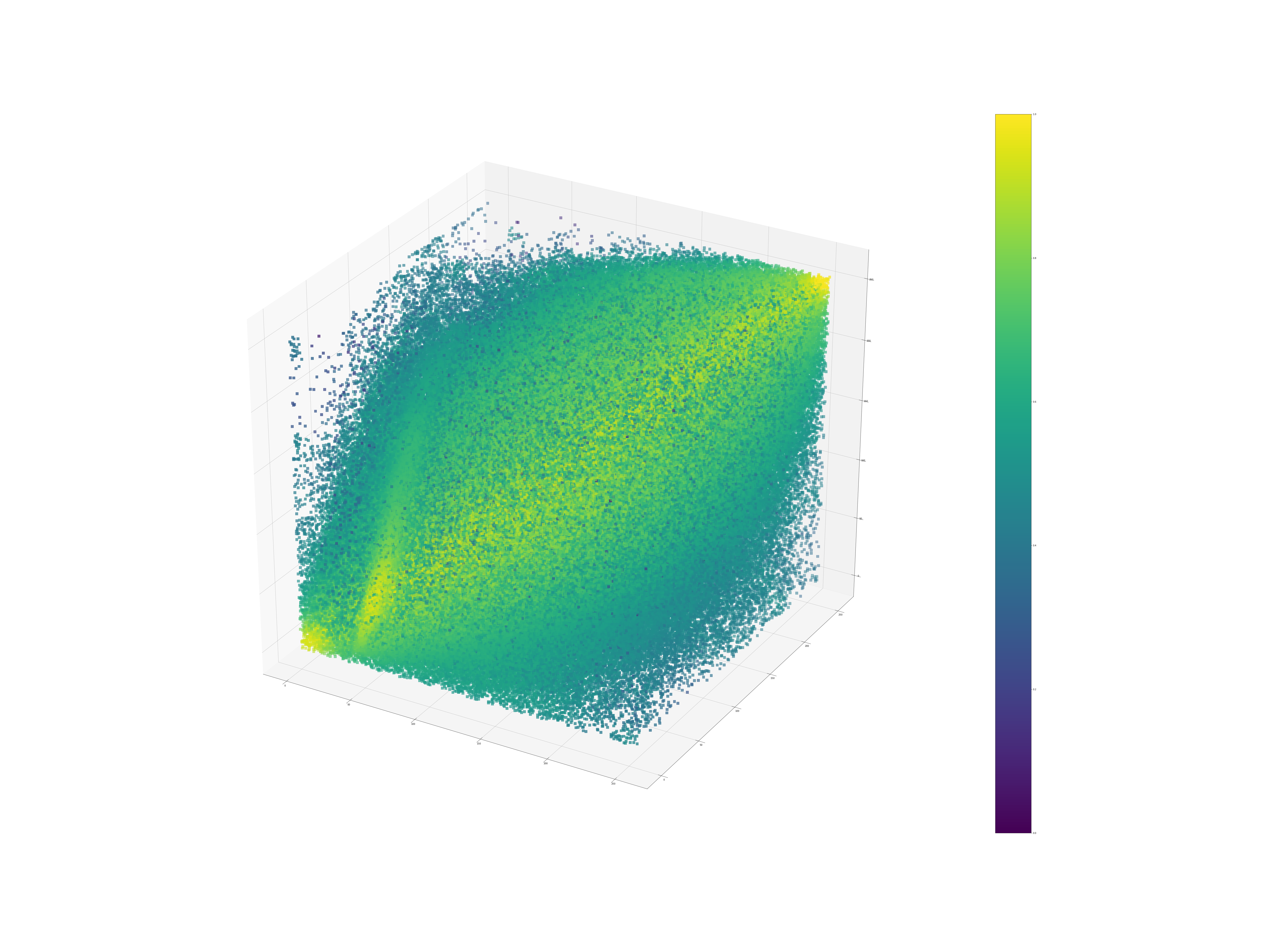}
  \label{fig:400k-logscale}}
  \caption{
    3D visualization of CIFAR-10 pixels colored by their neighborhood size. (a)
    A sample of 40,000 pixels color-coded by their neighborhood sizes in the
    dataset. (b) A sample of 400,000 pixels color-coded by their
    neighborhood sizes on logscale. We normalize the colorscale of the two figures using
    maximum neighborhood size and log of maximum neighborhood size respectively. The maximum
    neighborhood size is 1,567,080.}
  \label{fig:cifar-histogram}
\end{figure*}

\begin{figure}
  \centering
  \includegraphics[width=\linewidth]{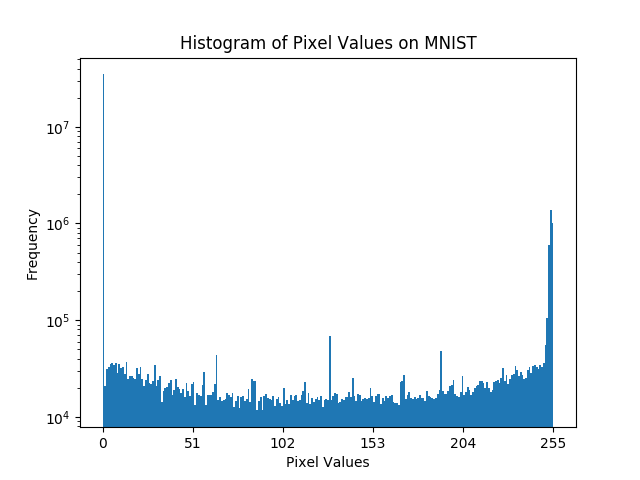}
  \caption{
    A histogram of pixel values in MNIST. Note that $y$-axis is in log scale.
  }
  \label{fig:mnist-histogram}
  \vspace{-3pt}
\end{figure}

We also visualize the MNIST pixels. Figure~\ref{fig:mnist-histogram} presents a
histogram of MNIST pixels. As we can see, most of the pixels are
black or highly white and very few are in between. This leads to \emph{very good
separability} as there are very few pixels that can actually be perturbed to
another cluster or equivalence class.


\section{Comments}\label{sec:comments}

In the previous section, we developed a theory for evaluating pixel
discretization defenses and presented empirical results on CIFAR-10 showing why
it is unlikely for pixel discretization techniques to successfully defend
against adversarial perturbations on this dataset. While our argument until now
was applied on color-depth reduction and the data-specific discretization in Section~\ref{sec:dataspecific}, the
same argument can be applied on other discretizations such as the thermometer
encoding defense by Buckman et al.~\cite{buckman2018thermometer}.

The thermometer encoding defense
encodes each pixel value $\bfx[i,j,l]$ as an $L$-dimensional vector whose $k$-th component
is 1 if $\bfx[i,j,l]$ > $k/L$ and 0 otherwise. Thermometer encoding essentially
rounds a pixel to one of $L$ levels like color-depth reduction but provides a
fancier encoding, which was intended to break gradient descent. Therefore,
the equivalence classes argument that we developed in Section~\ref{sec:hardness}
can be directly applied here. As our results in Figure~\ref{fig:mnist_cifar_cdf}
have shown, CIFAR-10 will be quite difficult to defend with this technique (the
exact statistics on the number of equivalence classes will differ in this
technique and will also depend on $L$).

We also observe that most preprocessing defenses to date have been developed in
an ad hoc manner, with the only evaluation being how well the defense works
against currently-known adversarial attacks. A better lower-bound performance of
a preprocessing defense may be obtained by quantifying how much the
preprocessing technique results in the reduction of equivalence classes for an
image for a given dataset. For pixel discretization, this quantification is
somewhat easy as the equivalence classes at the pixel level can be lifted to
equivalence classes at the image level. For other preprocessing techniques, the
specifics of those techniques will dictate how this quantification is done.

In general, few preprocessing techniques may provide robustness without
significantly affecting accuracy to a naturally trained model. A preprocessor
$T$ is useful only if it increases the robustness radius
(Section~\ref{sec:hardness}) so that $\delta_{F(T(\cdot)),p}(\bfx) =
\delta_{F,p}(T(\bfx)) > \delta_{F,p}(\bfx)$.  However, this requirement may
result in sacrificing some natural examples that have a label different from
$\bfx$ but are nonetheless mapped by $T$ to $\bfx$, resulting in a sacrifice of
accuracy. This is similar to the equivalence class argument in the previous section. Therefore, the design of the preprocessing defense has to be
conservative so as to not sacrifice accuracy too much. A stark example of this 
is in the accuracy-robustness results of pixel discretization on CIFAR-10 and a
naturally-trained model in Table~\ref{table:data-specific-discretization-de-experiments}.
Formalizing this intuition for broader class of preprocessing defense methods is left for future work.

\section{Related Work} 
\label{sec:related}
\paragraph{Adversarial settings in machine learning}
A number of adversarial settings exist in machine learning. The primary ones are
training-time attacks, model inversion, model extraction, and test-time attacks.
Training time attacks poison the training data resulting in learning a wrong
machine learning model. The attacker may skew the classification boundary to
their favor by introducing bad data labeled as good. Data pollution attacks have
been studied long before other attacks became
relevant~\cite{perdisci2006misleading}. In a model inversion attack, the
attacker learns information about data used to train the machine learning
model~\cite{fredrikson2014privacy, fredrikson2015model, wu2016methodology}. A similar but stronger
setting is the membership inference attack where the attacker identifies whether
an individual's information was present in the training
data~\cite{shokri2017membership}. Model extraction attacks attempt to steal a
model simply through blackbox queries~\cite{tramer2016stealing}. All these
settings are different from the setting that this paper focuses on, namely,
small perturbations to natural inputs to get them classified differently
than original inputs. 

\paragraph{Adversarial perturbations}
Adversarial perturbation attacks usually work by starting with a natural example
and solving an optimization problem to derive an adversarial example that has a
different label. A number of techniques with varying settings and efficacy have
been developed~\cite{papernot2016limitations, szegedy2013intriguing,
kurakin2016adversarial, moosavi2016deepfool, carlini2017towards,
madry2017towards}. The BPDA
attack~\cite{athalye2018obfuscated} is helpful when the gradient of a part of
the model is not available so that gradient descent is not possible. The attack
overcomes the problem by using the gradient of a differentiable approximation
of the function whose gradient is not available. We use this attack as the pixel
discretization defense is not differentiable.

\paragraph{Defenses against adversarial perturbations}
Most defenses can be divided into adversarial training defenses and
preprocessing defenses. Currently, the most
successful way to defend against adversarial attacks is adversarial training,
which trains the model on adversarial examples. This training in general imparts
adversarial robustness to the model. The state-of-the-art in this area is Madry
et al.~\cite{madry2017towards}.

Several other defenses fall under the category of preprocessing defenses. These
defenses are model agnostic, not requiring changing the model, and rather simply
transform the input in the home of increasing adversarial robustness. We have
already discussed pixel discretization works of Xu et al.~\cite{xu2017feature}
and Buckman et al.~\cite{buckman2018thermometer}. Other techniques include JPEG
compression~\cite{dziugaite2016study}, total variance
minimization~\cite{guo2017countering}, image quilting~\cite{guo2017countering},
image re-scaling~\cite{xie2017mitigating}, and neural-based
transformations~\cite{song2017pixeldefend, samangouei2018defense,
meng2017magnet}. Total variance minimization randomly selects a small set of
pixels from the input and then constructs an image consistent with this set of
pixels while minimizing a cost function representing the total variance in the
image. The idea is that adversarial perturbations are small and will be lost
when the image is simplified in this way. Similar ideas are behind JPEG
compression. Image quilting on the other hand uses a database of clean image
patches and replaces patches in adversarial images with the nearest clean
patches. Since the clean patches are not adversarial, the hope is that this
approach will undo any adversarial perturbation. Image rescaling and cropping
can change the adversarial perturbations' spatial position, which is important
for their effectiveness. Neural-based approaches train a neural
network to ``reform'' the adversarial examples.

Most of these techniques above have been explicitly broken by the BPDA attack and
others are also believed broken under BPDA attack. We also presented compelling
evidence why pixel discretization cannot work for complex dataset. We believe
that our intuition can be extended to many other classes of preprocessing
techniques and argue that they would not work under strong adversarial settings.

\section{Conclusion}
\label{sec:con}
With all preprocessing defenses developed to date against test-time adversarial
attacks on deep learning models called into question by recently proposed strong
white box attacks, we take a first step towards understanding these defenses
analytically. For this study, we focused on pixel discretization techniques, and,
through a multi-faceted analysis, showed that if the base model has poor
adversarial robustness, pixel discretization by itself is unlikely to improve
robustness on any but the simplest datasets. Our study may pave the way for
a broader understanding of the robustness of preprocessing defenses in general
and guide how to design future preprocessing defenses.


\section{Acknowledgments}
This material is partially supported by Air Force Grant FA9550-18-1-0166, the National Science
Foundation (NSF) Grants CCF-FMitF-1836978, SaTC-Frontiers-1804648
and CCF-1652140 and ARO grant number W911NF-17-1-0405. Any opinions,
findings, conclusions, and recommendations expressed herein are those
of the authors and do not necessarily reflect the views of the funding
agencies. Yingyu Liang would also like to acknowledge that support for this research was provided in part by the Office of the Vice Chancellor for Research and Graduate Education at the University of Wisconsin-Madison with funding from the Wisconsin Alumni Research Foundation.

\bibliographystyle{IEEEtran}
\bibliography{paper}


\end{document}